\documentclass[11pt]{article}
\usepackage{fullpage} 
\usepackage{url}

\usepackage{amssymb,comment}
\usepackage{amsmath}
\usepackage{tikz}
\usepackage{pgflibrarysnakes}
\usetikzlibrary{snakes}
\usepackage{times}
\usepackage{amstext}
\usepackage{calc}
\usepackage{amsopn}
\usepackage[noend]{algorithmic}
\usepackage[boxed]{algorithm}
\usepackage{eucal}
\usepackage{latexsym}
\usepackage[rflt]{floatflt}
\usepackage{wrapfig}
\usepackage{tabularx}

\usepackage{amsthm}

\newtheorem{theorem}{Theorem}
\newtheorem{corollary}[theorem]{Corollary}

\newtheorem{lemma}[theorem]{Lemma}
\newtheorem{conjecture}[theorem]{Conjecture}

\theoremstyle{definition}
\newtheorem{definition}[theorem]{Definition}

{\bf}{\rm}


\newcommand{\stubname}{{\sc{Stubborn problem}}}
\newcommand{\ccname}{{\sc{Compatible colouring}}}
\newcommand{\listpartitionname}{{\sc{List Matrix Partition}}}

\newcommand{\kccname}{{k-\sc{Compatible colouring}}}
\newcommand{\cspname}{{\sc{Constraint Satisfaction Problem}}}
\newcommand{\threeccname}{{3-\sc{Compatible colouring}}}
\newcommand{\twoccname}{{2-\sc{Compatible colouring}}}
\newcommand{\cspshort}{{\sc{CSP}}}
\newcommand{\cspfull}{full-{\sc{CSP}}}
\newcommand{\cspgamma}{{\sc{CSP($\Gamma$)}}}

\floatname{algorithm}{Algorithm}

\newcommand{\defproblemu}[3]{
  \vspace{1mm}
\noindent\fbox{
  \begin{minipage}{\textwidth}
  #1 \\ 
  {\bf{Input:}} #2  \\
  {\bf{Question:}} #3
  \end{minipage}
  }
  \vspace{1mm}
}

\newcommand{\ra}{{\rightarrow}}

\newcommand{\RR}{ {\mathcal{R} }}
\newcommand{\GG}{ {\mathcal{G} }}
\newcommand{\BB}{ {\mathcal{B} }}
\newcommand{\cC}{ {\mathcal{X}}}
\newcommand{\cX}{ {\mathcal{Y}}}
\newcommand{\cY}{ {\mathcal{Z}}}

\newcommand{\TS} { {\texttt{ToSet}}}
\newcommand{\BTS}{ {\texttt{ToSet}^{\BB}}}
\newcommand{\RTS}{ {\texttt{ToSet}^{\RR}}}
\newcommand{\GTS}{ {\texttt{ToSet}^{\GG}}}
\newcommand{\CTS}{ {\texttt{ToSet}^{\cC}}}
\newcommand{\XTS}{ {\texttt{ToSet}^{\cX}}}
\newcommand{\YTS}{ {\texttt{ToSet}^{\cY}}}

\newcommand{\RTD}{ {\texttt{ToDo}^{\RR}}}

\newcommand{\CTD}{ {\texttt{ToDo}^{\cC}}}
\newcommand{\XTD}{ {\texttt{ToDo}^{\cX}}}

\newcommand{\CS}{ {\texttt{Set}^{\cC}}}
\newcommand{\XS}{ {\texttt{Set}^{\cX}}}
\newcommand{\YS}{ {\texttt{Set}^{\cY}}}

\newcommand{\F}{ {\texttt{Free}}}
\newcommand{\BF}{ {\texttt{Free}^{\BB}}}
\newcommand{\RF}{ {\texttt{Free}^{\RR}}}
\newcommand{\GF}{ {\texttt{Free}^{\GG}}}
\newcommand{\CF}{ {\texttt{Free}^{\cC}}}
\newcommand{\XF}{ {\texttt{Free}^{\cX}}}

\newcommand{\CNX}{ {\texttt{Not}\cX^{\cC}}}
\newcommand{\XNY}{ {\texttt{Not}\cY^{\cX}}}
\newcommand{\YNC}{ {\texttt{Not}\cC^{\cY}}}

\newcommand{\CNY}{ {\texttt{Not}\cY^{\cC}}}
\newcommand{\XNC}{ {\texttt{Not}\cC^{\cX}}}
\newcommand{\YNX}{ {\texttt{Not}\cX^{\cY}}}

\newcommand{\CU}{U^{\cC}}
\newcommand{\RU}{ {U^{\RR}}}
\newcommand{\GU}{ {U^{\GG}}}
\newcommand{\BU}{ {U^{\BB}}}

\newcommand{\CC}{ {\mathcal{C}}}
\newcommand{\PP}{ {\mathcal{P}}}

\newcommand{\LL}{ {\mathcal{L}}}


%

\begin{document}

  \date{}

  \author{
    Marek Cygan
    \and
    Marcin Pilipczuk
    \and
    Micha\l{} Pilipczuk
    \and
    Jakub Onufry Wojtaszczyk\thanks{
    Dept.~of Mathematics, Computer Science and Mechanics,
    University of Warsaw, Poland, 
    \texttt{[cygan@,malcin@,michal.pilipczuk@students.,onufry@]mimuw.edu.pl}}}

  \title{The stubborn problem is stubborn no more \\
\large{    (a polynomial algorithm for 3--compatible colouring and the stubborn list partition problem)}}

\begin{titlepage}
  \def\thepage{}
  \thispagestyle{empty}
  \maketitle

\begin{abstract}
One of the driving problems in the CSP area is the Dichotomy Conjecture, formulated in 1993 by Feder and Vardi~[STOC'93], stating that for any fixed relational structure $\Gamma$ the Constraint Satisfaction Problem \cspgamma{} is either NP--complete or polynomial time solvable. 
A large amount of research has gone into checking various specific cases of this conjecture. One such variant which attracted a lot of attention in the recent years is the \listpartitionname{} problem. 
In 2004 Cameron et al.~[SODA'04] classified almost all \listpartitionname{} variants for matrices of size at most four. 
The only case which resisted the classification became known as the \stubname. In this paper we show a result which enables us to finish the classification --- thus solving a problem which resisted attacks for the last six years.

Our approach is based on a combinatorial problem known to be at least as hard as the \stubname{} --- the \threeccname{} problem.
In this problem we are given a complete graph with each edge
assigned one of $3$ possible colours and we want
to assign one of those $3$ colours to each vertex 
in such a way that no edge has the same
colour as both of its endpoints. 
The tractability of the \threeccname{}
problem has been open for several years and the best
known algorithm prior to this paper is due to Feder et al.~[SODA'05]
 --- a quasipolynomial algorithm with a $n^{O(\log n / \log \log n)}$ time complexity.
In this paper we present a polynomial--time algorithm for 
the \threeccname{} problem and
consequently we prove a dichotomy for the $k$-{\sc Compatible Colouring}
problem.


\end{abstract}
\end{titlepage}

\newpage

\newcommand{\cN}[1]{\ensuremath{N[#1]}}
\newcommand{\naglowek}[1]{\noindent {\bf{#1}} }

\section{Introduction}

In this paper we consider a variant of the graph colouring problem, namely the \kccname{} problem.
We are given a complete graph with each edge assigned one of $k$ possible colours and we want
to assign one of those $k$ colours to each vertex in such a way that no edge has the same
colour as both of its endpoints. Formally:

\defproblemu{\kccname{} (sometimes called {\sc Edge Free} $k$-{\sc{Colouring}})}
{A complete undirected graph $G = (V,E)$ and a function $\CC:E\rightarrow \{0,\ldots,k-1\}$}
{Does there exist a function $\phi: V \rightarrow \{0,\ldots,k-1\}$ such that for each edge $uv \in E$ either $\phi(u) \not = \CC(uv)$ or $\phi(v) \not = \CC(uv)$}

For $k=1$ this problem is meaningless, but for $k=2$ it can be interpreted as a split graphs recognition problem.
Indeed, if we consider a graph $G' = (V,\CC^{-1}\{1\})$ (i.e., we take only those edges from $e\in E$ for which $\CC(e) = 1$)
our task is equivalent to partitioning the graph $G'$ into a clique and an independent set.
Graphs that can be partitioned in this way are called {\em split graphs} and can be 
recognized in linear time~\cite{golumbic}.

It is known~\cite{hell-nesetril} that for $k \ge 4$ the \kccname{} problem becomes NP-complete.
However, for $k=3$ the problem of its tractability has been open for several years.
In this paper we show a polynomial--time algorithm for this case.

To compare, the classical colouring problem is NP-complete for $k\ge 3$ and polynomial time solvable
for $k \le 2$.
Until now it was not known whether \kccname{} admits such a dichotomy since the previously best algorithm (by Feder et al. 
from~2005~\cite{kral-soda05}) has a $n^{O(\log n / \log \log n)}$ time complexity
which is an improvement over the $n^{O(\log n)}$ time complexity of an algorithm by Feder and Hell~\cite{feder-hell-full-csp}.

\paragraph{Related work and motivation}

We briefly sketch the \cspname{} (\cspshort{}) definition 
in the notation proposed by Feder and Vardi~\cite{feder-vardi}.
For a fixed relational structure $\Gamma$ in the problem \cspgamma{} 
we are given a second relational structure $G$ and we are asked 
whether there exists a homomorphism of $G$ to $\Gamma$ (a mapping $f: V(G) \rightarrow V(\Gamma)$
which preserves all the relations).
Feder and Vardi~\cite{feder-vardi} in 1993 formulated the following conjecture
which remains open and motivates a lot of research in this area.

\begin{conjecture}[The Dichotomy Conjecture~\cite{feder-vardi}]
\label{conj:dichotomy}
For any fixed relational structure $\Gamma$ the problem \cspgamma{} is either NP-complete
or polynomial time solvable.
\end{conjecture}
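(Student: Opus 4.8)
The plan is to attack the statement through the \emph{algebraic approach}, which reduces the dichotomy to a property of the \emph{polymorphisms} of $\Gamma$. First I would pass to the \emph{core} of $\Gamma$: every finite relational structure is homomorphically equivalent to a core, unique up to isomorphism, and this substitution does not change the complexity of \cspgamma{}, so I may assume $\Gamma$ is a core. The key tool is the Galois correspondence between sets of relations closed under primitive positive (pp-) definitions and clones of operations: up to logspace reductions, the complexity of \cspgamma{} is determined by the clone $\mathrm{Pol}(\Gamma)$ of operations preserving every relation of $\Gamma$. Concretely, if a structure $\Gamma'$ is pp-definable, or more generally pp-interpretable, in $\Gamma$, then $\mathrm{CSP}(\Gamma')$ reduces to \cspgamma{}. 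The conjectured boundary is then crisp: \cspgamma{} lies in P when the core $\Gamma$ admits a \emph{weak near-unanimity} (equivalently, \emph{Taylor}) polymorphism, and is NP-complete otherwise.

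The NP-completeness half is the more tractable one. If the core $\Gamma$ has no Taylor polymorphism, one shows by the algebraic tractability/hardness characterisation of Bulatov, Jeavons and Krokhin that $\Gamma$ pp-interprets a two-element structure all of whose relations are preserved only by projections. Such a structure encodes, for instance, \textsc{Not-All-Equal-SAT} and therefore has an NP-hard CSP, and transporting this hardness back along the pp-interpretation yields NP-completeness of \cspgamma{}. Thus the entire difficulty concentrates on the positive side.

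The tractability direction is the main obstacle and is where essentially all the work lies: I must produce, for every core $\Gamma$ carrying a weak near-unanimity polymorphism, a genuine polynomial-time algorithm. Two classical engines are available and must be combined. The first is \emph{local consistency}, the bounded-width algorithm, which solves exactly those instances whose associated algebra cannot encode systems of linear equations. The second is the \emph{few-subpowers}, or \emph{Maltsev}, algorithm --- a representation-maintaining generalisation of Gaussian elimination that handles structures possessing an edge or cube term. Neither engine alone suffices, because a general $\Gamma$ displays both ``group-like'' and ``consistency-like'' behaviour simultaneously on different congruence blocks.

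The heart of the proof is therefore a structural decomposition of each instance driven by the algebra $\mathrm{Pol}(\Gamma)$ --- analysing absorbing subuniverses, centralisers, and the congruence lattices of the relevant subalgebras --- so that the instance is recursively reduced along proper subalgebras and quotients until it falls into pieces solvable by either local consistency or the Maltsev machinery. The hard part will be proving this recursion both \emph{sound}, so that a global solution can always be reassembled from the solved pieces, and \emph{terminating} in polynomially many steps; controlling the interaction between the linear part and the bounded-width part is precisely the point at which all earlier attempts stalled, and it is exactly what makes the full conjecture dramatically harder than any of its restricted instances, such as the \kccname{} problem resolved in this paper.
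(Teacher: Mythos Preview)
The paper does not prove this statement: it is explicitly presented as the Feder--Vardi \emph{Dichotomy Conjecture}, an open problem cited only as motivation. The paper's technical contribution is a polynomial-time algorithm for \threeccname{} (and thereby for the \stubname{}), which settles one very specific instance inside the \listpartitionname{} classification; it makes no attempt whatsoever to attack the full conjecture. So there is no ``paper's own proof'' to compare your proposal against.

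Your outline is the standard algebraic programme (pass to the core, study $\mathrm{Pol}(\Gamma)$, split at the Taylor/non-Taylor boundary, combine bounded width with few-subpowers machinery), and it is indeed the approach that eventually succeeded in the hands of Bulatov and, independently, Zhuk. But as written it is a research plan, not a proof: you yourself identify the crux --- making the absorption/centraliser-driven recursion sound and polynomial --- and then defer it. That step is the entire difficulty; everything you wrote before it (hardness via pp-interpretation of a projection-only two-element structure, the two separate tractable engines) was already known well before this paper and does not by itself yield the dichotomy. So the gap is real and acknowledged: the proposal names the decisive lemma without supplying it, and nothing in this paper's elementary, iterative-compression-flavoured branching argument for \threeccname{} sheds any light on how to fill it.
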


Since then dozens of papers have been written proving this conjecture
in several special cases (for a survey see~\cite{hell-nesetril}).
In particular, Conjecture~\ref{conj:dichotomy} holds for every
relational structure of size two~\cite{shaefer} and three~\cite{bulatov}.

The \kccname{} problem is a variant of \cspfull{} problems introduced by Feder and Hell in~\cite{feder-hell-full-csp}, whereas the exact name \kccname{} to the best of our knowledge
comes from~\cite{there-and-back}. 
Intuitively, in \cspfull{} problems we restrict ourselves to structures $G$ in which every tuple of elements is restricted by some constraint.
A similar variant of \cspshort{} studied in the literature is called the \listpartitionname{} 
where $\Gamma$ is represented by an $r \times r$ symmetrical matrix $M$
with entries being subsets of $\{0,\ldots,q-1\}$ for some integer $q$. We are given a complete graph $G$ with vertices equipped with subsets of $\{0,\ldots,r-1\}$ and edges assigned values from $\{0,\ldots,q-1\}$. We ask whether there exists a function $\phi: V(G) \rightarrow \{0,\ldots,r-1\}$ such that for each $v\neq w\in V(G)$, $\phi(v)$ belongs to the set tied to the vertex $v$ and the value associated with the edge $vw$ belongs to the set in the $\phi(v)$-th row and $\phi(w)$-th column of $M$. A formal description can be found in~\cite{fhkm}. It is known that for fixed $q,r$ the \listpartitionname{} problem enjoys a quasi-dichotomy.

\begin{theorem}[Quasi-dichotomy Theorem~\cite{feder-manuscript}]
\label{thm:quasi}
For each pair of positive integers $r,q$ and for each symmetrical $r \times r$ matrix $M$
whose entries are subsets of the set $\{0,\ldots,q-1\}$ 
the \listpartitionname{} problem is either NP-complete or solvable in
quasipolynomial time.
\end{theorem}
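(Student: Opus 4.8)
\medskip
\noindent\textbf{Proof proposal.}
The plan is to prove Theorem~\ref{thm:quasi} by induction on the size $r$ of the matrix $M$ (and, for a fixed $r$, on a secondary measure of how many ``rich'' part--types $M$ has), establishing for every $M$ either an NP--hardness reduction or a recursive quasipolynomial algorithm. For the hardness side one first fixes, once and for all, a finite family $\mathcal{H}_{r,q}$ of small matrices for which \listpartitionname{} is already known to be NP--complete: typical members encode \emph{proper $q'$--colouring} for $q' \ge 3$, or ``$G$ or $\overline{G}$ has a $k$--clique cover'', or hard matrices exposed at earlier stages of the classification. If some member of $\mathcal{H}_{r,q}$ occurs as a principal submatrix of $M$, then restricting every vertex list to the corresponding rows reduces that hard problem to List--$M$--Partition, so the problem is NP--complete and we are done. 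The substance of the argument is therefore to show that a matrix avoiding every obstruction in $\mathcal{H}_{r,q}$ has enough structure to drive the algorithm.

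On the algorithmic side the engine is a Ramsey--type ``sparse--dense'' argument. Given an instance on $n$ vertices, at least $n/2^{r}$ of them share a common list $L \subseteq \{0,\ldots,r-1\}$, and among those, multicolour Ramsey produces (in polynomial time, by the greedy bound) a set $S$ of $\Omega(\log n)$ vertices all of whose internal edges carry one fixed colour $c$. Because $|S| = \Theta(\log n)$, one can afford to branch over \emph{all} $r^{|S|} = n^{O(\log r)}$ assignments of parts to $S$ --- this is exactly the step that makes the running time quasipolynomial rather than exponential, and it is why no polynomial bound is claimed. In each branch, some part $i$ receives $\Omega(\log n / r)$ vertices of $S$; consistency of that branch requires $c \in M_{ii}$, and then propagating the edge constraints from $S$ to the rest of the graph either forces a part--type out of every surviving list (so the instance becomes a List--$M'$--Partition instance with $M'$ a strictly smaller matrix, to which the inductive hypothesis applies), or forces enough identifications and restrictions that the instance splits into pieces of size at most a constant fraction of $n$. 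Either way one obtains a recurrence of the shape $T_M(n) \le n^{O(\log r)}\bigl(T_{M'}(n) + T_M(cn) + \mathrm{poly}(n)\bigr)$ with $c<1$ and $M'$ smaller than $M$, which unfolds, for fixed $r$ and $q$, to a quasipolynomial $n^{O(\log n)}$ bound.

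The hard part is the structural dichotomy that licenses this recursion: one must prove that a matrix with no member of $\mathcal{H}_{r,q}$ as a principal submatrix is genuinely ``tame'', i.e.\ that the only way a part--type can absorb an arbitrarily large homogeneous set $S$ without being eliminated is if its interaction with the remaining part--types has already been simplified (it behaves as a pure clique--part, a pure independent--set--part, or is separated from the rest of $M$). This demands a careful case analysis of the diagonal entries $M_{ii}$ and of the off--diagonal blocks between the ``dense'' part--types, together with the verification that every configuration surviving that analysis either reduces as above or is solvable directly. It is precisely here that the coexistence of several dense part--types forces the $\log n$ in the exponent and obstructs a polynomial bound --- and it is the residual $4\times 4$ configuration left unresolved by this analysis, the \stubname{}, that the present paper must handle by separate means.
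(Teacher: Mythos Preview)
The paper does not contain a proof of Theorem~\ref{thm:quasi}. This theorem is quoted as background from an external source (Feder and Hell, \cite{feder-manuscript}); the paper merely states it and then remarks that the best known bound for the quasipolynomial is $n^{O(\log n)}$. There is therefore nothing in the paper against which your proposal can be compared.

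For what it is worth, your sketch is broadly in the spirit of the sparse--dense arguments that Feder and Hell actually use: find a large monochromatic set by Ramsey, branch over all assignments of that set at quasipolynomial cost, and recurse on strictly smaller matrices or strictly smaller instances. Two caveats, though. First, your hardness side is too vague to constitute a proof: you posit a finite obstruction family $\mathcal{H}_{r,q}$ and assert that avoiding it yields enough structure, but the actual dichotomy does not work by exhibiting a fixed finite list of hard submatrices for every $r,q$; rather, one shows directly that the branching always succeeds (yielding quasipolynomial time) unless a concrete NP-hard problem embeds. Second, your final paragraph conflates this theorem with the classification project: the \stubname{} is not a residual case of the quasi-dichotomy (it \emph{is} covered by Theorem~\ref{thm:quasi} and has a quasipolynomial algorithm); it is the residual case of the attempt to upgrade quasi-dichotomy to genuine dichotomy for $r\le 4$, $q=2$. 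That distinction is exactly the point of the present paper.
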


The currently best bound for the quasipolynomial from Theorem~\ref{thm:quasi} due to 
Feder and Hell~\cite{feder-manuscript} is $n^{O(\log n)}$, where $n=|V(G)|$.
In order to check whether this quasi-dichotomy is a classical dichotomy
several special cases for small values of $q$ and $r$ were studied.
In particular, Cameron et al.~\cite{cameron} were able to classify almost all matrices with $r \le 4$ and $q=2$.
For all classified matrices either a polynomial time algorithm or
a NP-completeness proof was given.
Interestingly enough, the classified cases were equivalent to numerous 
classical graph problems such as: $3$-colourability, clique cutset, stable cutset,
skew partition and split graphs recognition.
To underline the significance of the \listpartitionname{}
we recall (as stated in~\cite{cameron}) that the resolution 
of the {\em Strong Perfect Graph Conjecture}
by Chudnovsky et al.~\cite{chudnovsky} relies in part on decompositions
that can be formulated as \listpartitionname{} instances.
The only two matrices that Cameron et al. could not classify 
are polynomially equivalent to the following problem which came to be called the \stubname{}.

\defproblemu{\stubname}{
	An undirected graph $G = (V,E)$ and a constraint function $\LL : V \ra \PP(\{1,2,3,4\})$
	}{
	Does there exists a colouring $\phi : V \ra \{1,2,3,4\}$, for which $\phi(v) \in \LL(v)$, $\phi^{-1}(4)$ is a clique, and for any edge $uw \in E$ the set $\phi^{-1}(\{u,w\})$ is different from $\{1\}$, $\{2\}$ and $\{1,3\}$?
	}

It is known that a polynomial algorithm for the \threeccname{} problem
implies a polynomial algorithm for the \stubname{} (as stated in~\cite{feder-hell-full-csp}).
Due to their role as the the last unresolved case in the classification of Cameron et al., the problems attracted
quite a lot of attention.
In particular, the polynomial status of \threeccname{} or \stubname{}
was mentioned as an open problem in numerous places 
including~\cite{open-garden,cameron,dantas,dagstuhl,feder-manuscript,feder-hell-full-csp,kral-soda05,stacho,tucker,there-and-back}.

\paragraph{Our results}

In this paper we present a polynomial time algorithm for the \threeccname{} problem and hence for the \stubname{},
resolving a long standing open problem in the \cspfull{} dichotomy project:

\begin{theorem} There exists a $O(|(V,\CC)|^{3.5})$ algorithm for the \threeccname{} problem, where $|(V,\CC)|=O(|V|^2)$ is the size of the instance. \end{theorem}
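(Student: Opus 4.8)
The plan is to reduce an instance of \threeccname{} to a sequence of polynomially many instances of \twoccname{} (equivalently, split graph recognition / 2-SAT), and to show that a bounded amount of branching and propagation suffices to either solve the instance or reduce the number of ``live'' colours at every vertex. I would begin by setting up the natural constraint structure: for each vertex $v$ maintain a list $\LL(v)\subseteq\{0,1,2\}$ of colours still available to it, initialised to $\{0,1,2\}$. The edge constraint ``$\phi(u)=\CC(uv) \Rightarrow \phi(v)\neq\CC(uv)$'' is, for each colour $c$, a constraint only between vertices that still admit $c$; so for a fixed colour $c$ the subproblem of deciding the $c$-coordinate of every vertex is exactly a 2-colouring-type (clique/independent-set, i.e.\ split-graph) condition on the graph $G_c=(V,\CC^{-1}(c))$ restricted to vertices with $c\in\LL(v)$. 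The core idea is that once every list has size at most $2$, the whole instance becomes an instance of \twoccname{}-type constraints: each vertex has two candidate colours, each edge forbids at most one joint assignment, and this is a 2-SAT instance solvable in linear time. Thus the entire difficulty is in getting from lists of size $3$ down to lists of size $2$.

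The key structural step I would aim to prove is a \emph{bounded-branching reduction}: there is a polynomial-size family of ``guesses'' (e.g.\ guessing the colour of $O(1)$ carefully chosen vertices, or guessing the colour class structure around a single vertex together with its monochromatic neighbourhoods) such that, after fixing a guess and running exhaustive constraint propagation, every remaining vertex has a list of size at most $2$. Propagation works as follows: whenever $\phi(u)=c$ is decided, every neighbour $w$ with $\CC(uw)=c$ has $c$ removed from $\LL(w)$; whenever $|\LL(w)|=1$ the colour of $w$ is decided; if some $\LL(w)=\emptyset$ the branch fails. To force list shrinkage globally one exploits that $G$ is a \emph{complete} graph: between any two vertices $u,w$ there is an edge of some colour $\CC(uw)=c$, which already forbids the joint assignment $(\phi(u),\phi(w))=(c,c)$. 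Counting arguments on the monochromatic subgraphs $G_0,G_1,G_2$ (which partition $E$) — in the spirit of the quasipolynomial algorithms of Feder et al.~\cite{kral-soda05,feder-hell-full-csp} but pushed to a polynomial bound — should show that a vertex incident to ``many'' edges of colour $c$ essentially determines, or is determined by, a near-split structure in $G_c$, letting us branch on $O(1)$ options rather than $O(\log n)$ levels of recursion.

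Concretely I would organise the algorithm as: (1) exhaustively apply propagation rules until a fixed point; if some list is empty, reject this branch; (2) if all lists have size $\le 2$, finish via the 2-SAT/\twoccname{} solver; (3) otherwise pick a vertex $v$ with $\LL(v)=\{0,1,2\}$ and branch on a constant number of cases capturing the local colour pattern at $v$, argue that each branch triggers propagation collapsing many lists, and recurse. The main obstacle — and the heart of the paper — is step (3): proving that a \emph{constant-depth}, \emph{polynomial-width} branching tree suffices, i.e.\ that we never need the logarithmic recursion depth of the earlier algorithms. This requires a delicate case analysis of how the three colour classes interact on a clique, presumably isolating a small set of ``critical'' vertices or edges whose colouring dictates the rest, and showing the dependency graph among undecided vertices becomes $2$-SAT-like after $O(1)$ guesses. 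The running time $O(|(V,\CC)|^{3.5})$ then comes from a polynomial number of branches, each requiring a propagation pass ($O(|E|)=O(|V|^2)$) and, in the leaves, a matching-type or 2-SAT subroutine whose bottleneck (plausibly a bipartite-matching or flow computation of size $O(|V|)$, costing $O(|V|^{2.5})$) accounts for the half-power; I would defer the exact exponent bookkeeping to the end once the branching scheme is fixed.
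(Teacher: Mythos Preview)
Your proposal correctly observes that once every vertex has a list of size at most two the problem becomes a 2-SAT instance, and that constraint propagation is a natural tool. But the heart of your plan --- step (3), the claim that a \emph{constant-depth, polynomial-width} branching tree suffices to shrink every list to size at most $2$ --- is exactly the open problem, and you offer no argument for it beyond ``a delicate case analysis \ldots presumably isolating a small set of critical vertices''. The prior algorithms of Feder--Hell and of Feder et al.\ already follow the branch-propagate-recurse skeleton you describe; they achieve only $n^{O(\log n)}$ and $n^{O(\log n/\log\log n)}$ precisely because no one could bound the branching depth below logarithmic. Nothing in your outline explains how to beat that, so as written the proposal is not a proof sketch but a restatement of the difficulty.

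The paper's approach is structurally different and hinges on an idea you do not mention: \emph{iterative compression}. Vertices are added one at a time while maintaining a feasible colouring $\phi_0$ of the current prefix; when inserting a new vertex $v_0$ one has a reference colouring of $V\setminus\{v_0\}$ to exploit. The colour classes of $\phi_0$ carry strong structure (no $\cC$-edges inside $\phi_0^{-1}(\cC)$), so when the algorithm is forced to branch on a ``to-do'' set $\CTD$, that set has edges of a \emph{single} remaining colour and the branching fan-out is $|\CTD|+1$. The key combinatorial fact (Lemma~\ref{lem:splitCF}) is that after the initial shifts in each child, the surviving free sets $\CU_i$ of colour $\cC$ are pairwise \emph{disjoint} subsets of the parent's $\CF$. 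This drives a multiplicative mass argument (Lemma~\ref{lem:splitmass}) bounding the number of leaves by $O(n^3)$. Crucially, the branching depth is \emph{not} constant --- it can be $\Theta(n)$ --- so the mechanism is entirely different from the constant-depth reduction you conjecture.

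Your running-time accounting is also off: there is no matching or flow subroutine. The bound $|(V,\CC)|^{3.5}=n^7$ comes from $n$ compression rounds, each with $O(n^3)$ leaves, $O(n)$ operations on each root-to-leaf path, and $O(n^2)$ time per operation.
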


\begin{theorem} There exists a $O(|G|^7)$ algorithm for the \stubname, where $|G|=O(|V|+|E|)$ is the size of the instance. \end{theorem}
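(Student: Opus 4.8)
The plan is to reduce the \stubname{} to the \threeccname{} problem and then invoke the cubic-plus algorithm guaranteed by the previous theorem. The reduction itself is folklore (it is the implication attributed to Feder and Hell in the excerpt), so the work is mostly in writing it out carefully and bounding the running time. First I would recall the standard polynomial equivalence: given a \stubname{} instance $(G,\LL)$ with $n=|V|$ vertices and $m=|E|$ edges, one guesses a small amount of information about the vertices receiving colour $4$ (those forming a clique), then models the remaining $\{1,2,3\}$-colouring constraints — which on every edge forbid exactly the patterns $\{1\}$, $\{2\}$, $\{1,3\}$ — as edge-colour constraints of a \threeccname{} instance on the complete graph with the same vertex set. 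The forbidden patterns $\{1\}$ and $\{2\}$ are exactly ``both endpoints coloured $c$'' constraints, which \threeccname{} handles natively, while $\{1,3\}$ is an asymmetric pattern that must be simulated by introducing a bounded number of auxiliary gadget vertices (or by a bounded number of additional guesses) so that it too becomes a ``monochromatic edge'' constraint in the enlarged instance.

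\medskip

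The key steps, in order, are: (i) reduce to the case where $\LL(v)\subseteq\{1,2,3\}$ by a preprocessing/branching step that fixes the behaviour of would-be colour-$4$ vertices — since $\phi^{-1}(4)$ must be a clique and its members interact with the rest only through the edge constraints, it suffices to branch over a bounded structure (at the cost of at most a polynomial-factor blow-up), reducing to polynomially many instances with lists inside $\{1,2,3\}$; (ii) enforce the list constraints $\phi(v)\in\LL(v)$ inside a \threeccname{} instance by attaching to each vertex $v$ a constant number of pendant gadget vertices whose incident edge colours rule out the colours not in $\LL(v)$; (iii) simulate each asymmetric forbidden pattern $\{1,3\}$ on an edge $uw$ by a small gadget — again a constant number of auxiliary vertices and suitably coloured edges — turning it into finitely many monochromatic-edge constraints; (iv) set all remaining edge colours of the complete graph so that they impose no constraint beyond those already introduced (this is possible because an edge coloured $c$ only forbids the single pattern ``both endpoints $c$'', so edges between unconstrained pairs can be coloured to a ``safe'' value). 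After these steps we obtain, from the original instance, a polynomial number of \threeccname{} instances, each of size $O(n^2)=O((|V|+|E|)^2)$, such that the \stubname{} instance is a yes-instance iff at least one of them is. Running the $O(N^{3.5})$ algorithm (with $N=O((|V|+|E|)^2)$) on each and combining the guessing overhead then yields the claimed $O(|G|^7)$ bound.

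\medskip

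The main obstacle I expect is step (iii) — faithfully simulating the asymmetric constraint $\{1,3\}$. The patterns $\{1\}$ and $\{2\}$ are literally the native constraint type of \threeccname{} (no edge may have both endpoints equal to its colour), but $\{1,3\}$ says ``it is not the case that one endpoint is $1$ and the other is $3$'', which is not of that form. One must design a gadget — presumably using one or two fresh auxiliary vertices per edge, with carefully chosen edge colours among the three vertices so that a $\{1,3\}$-colouring of $uw$ forces a monochromatic edge somewhere in the gadget, while every other colouring of $uw$ extends to a valid colouring of the gadget; verifying this case analysis (nine colour combinations of $u,w$) is the delicate part, and one must also check the gadgets for distinct edges do not interact. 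A secondary subtlety is keeping the total number of guesses in step (i) and the total gadget size polynomial so that the final exponent is exactly $7$; this requires being a little economical, e.g. sharing list-gadgets and noting that each of the $O(n^2)$ edges contributes only $O(1)$ new vertices, keeping $N=O(n^2)$ and the number of guessed instances $n^{O(1)}$ with the exponents arranged so that $n^{O(1)}\cdot (n^2)^{3.5}=O(|G|^7)$.

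\medskip

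Once all instances are built, the argument concludes by a straightforward correctness proof: a \stubname{} solution $\phi$ determines one of the guessed branches and, after the reduction, a valid \threeccname{} colouring of the corresponding instance (extend $\phi$ restricted to $\{1,2,3\}$-vertices to the gadget vertices as prescribed); conversely any valid \threeccname{} colouring of one of the instances, restricted to the original vertices and combined with the guessed colour-$4$ clique, satisfies all list constraints, the clique condition, and all forbidden-pattern conditions, hence is a \stubname{} solution. This gives the stated $O(|G|^7)$ algorithm.
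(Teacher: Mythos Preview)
Your proposal has a genuine gap in step (i). You assert that one can ``branch over a bounded structure'' to fix the colour-$4$ vertices and reduce to polynomially many instances with lists in $\{1,2,3\}$, but you give no argument for why this branching is polynomial. The set $\phi^{-1}(4)$ is an arbitrary clique of $G$, and a graph may have exponentially many cliques; nothing in the problem bounds its size or structure, so enumerating candidates is not obviously subexponential. This is not a routine detail --- it is precisely the difficulty of folding colour $4$ into a three-colour framework. Your final arithmetic ``$n^{O(1)}\cdot (n^2)^{3.5}=O(|G|^7)$'' only balances if the branching factor is $O(1)$: since $|G|$ can be as small as $\Theta(n)$, any nontrivial polynomial blow-up from step (i) already overshoots the target bound.

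The paper's reduction avoids branching entirely and handles both colour $4$ and the asymmetric $\{1,3\}$ pattern by a different device: it \emph{merges} colours $1$ and $3$ into a single \threeccname{} colour $\GG$, sends $2\mapsto\RR$ and $4\mapsto\BB$, colours every non-edge $\BB$ (so ``not both $\BB$'' on non-edges is exactly the clique condition on $\phi^{-1}(4)$) and every edge $\RR$ (forbidding the pattern $\{2\}$). The patterns $\{1\}$ and $\{1,3\}$ on an edge $uw$ both become ``both endpoints $\GG$'', which is \emph{not} forbidden outright; instead, in the reverse translation a $\GG$-vertex receives $3$ whenever $3$ lies in its list and $1$ otherwise, so two adjacent $\GG$-vertices whose lists both contain $3$ legitimately receive the allowed pattern $\{3\}$. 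Only on edges $uw$ with $3\notin\LL(u)\cap\LL(w)$ is a constant-size (type-two) gadget attached to forbid both endpoints from being $\GG$. List constraints are enforced by three shared type-one gadgets. The result is a \emph{single} \threeccname{} instance on $O(|V|+|E|)=O(|G|)$ vertices, and one call to the $O(N^{3.5})$ algorithm with $N=O(|G|^2)$ yields the claimed $O(|G|^7)$.

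So the obstacle you flagged in step (iii) is real for your encoding but is sidestepped entirely by the merging trick, and step (i) as you describe it does not go through.
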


Our results prove the dichotomy for the \kccname{} problem.
Moreover, combining with results by Cameron at al.~\cite{cameron} 
we finish the matrix classification up to size $4 \times 4$ 
for the \listpartitionname{} problem proving that quasi-dichotomy
can be strengthened to the classical dichotomy and hence improve 
results of Feder et al.~\cite{fhkm}.

\begin{theorem} Let $M$ be a symmetrical $r\times r$ matrix whose entries are subsets of $\{0,1\}$. If $r\leq 4$ then for $M$ the \listpartitionname{} problem is either NP-complete or solvable in polynomial time.
\end{theorem}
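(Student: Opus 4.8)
The plan is to derive the statement by combining the classification of Cameron et al.~\cite{cameron} with the algorithmic results established earlier in this paper. Cameron et al. proved that for every symmetrical matrix $M$ of size at most $4\times 4$ whose entries are subsets of $\{0,1\}$, the \listpartitionname{} problem is either NP-complete or solvable in polynomial time, \emph{with the single exception} of two matrices, both of which are polynomially equivalent to the \stubname{}. Hence the whole classification reduces to settling the complexity of these two exceptional matrices, i.e.\ to showing that their \listpartitionname{} problem lies in P.

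To do that I would invoke the chain of polynomial-time reductions already recorded in the excerpt: the two exceptional matrices are polynomially equivalent to the \stubname{}, and a polynomial algorithm for the \threeccname{} problem yields a polynomial algorithm for the \stubname{} (as stated in~\cite{feder-hell-full-csp}). Since the first theorem of this paper provides an $O(|(V,\CC)|^{3.5})$ algorithm for \threeccname{}, the second theorem provides an $O(|G|^7)$ algorithm for the \stubname{}, and therefore a polynomial algorithm for each of the two exceptional matrices. Together with all the matrices already handled by Cameron et al., every symmetrical $r\times r$ matrix with $r\le 4$ and entries in $\{0,1\}$ is then either NP-complete or polynomial-time solvable, which is exactly the asserted dichotomy; in particular it strengthens the quasi-dichotomy of Theorem~\ref{thm:quasi} (restricted to $q=2$, $r\le4$) to a classical dichotomy, improving the results of~\cite{fhkm}.

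The only points requiring care are bookkeeping: (i) verifying that the reductions between the two unclassified matrix \listpartitionname{} instances, the \stubname{}, and \threeccname{} go through in the direction needed, so that a polynomial algorithm for \threeccname{} genuinely closes those two cases; and (ii) confirming that the analysis of~\cite{cameron} leaves exactly these two matrices open and, for every other matrix of size at most four, supplies either an explicit polynomial algorithm or an NP-completeness proof, so that nothing else remains unresolved. Neither point is the real difficulty here — the substance, namely the polynomial-time algorithm for \threeccname{}, is the content of the theorems stated above and the body of the paper, and this final theorem is essentially a corollary of them.
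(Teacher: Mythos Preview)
Your proposal is correct and matches the paper's approach exactly: the theorem is presented in the paper as a corollary obtained by combining the classification of Cameron et al.~\cite{cameron} (which leaves open only two matrices, both polynomially equivalent to the \stubname{}) with the polynomial-time algorithm for the \stubname{} established via \threeccname{}. There is no separate proof in the paper beyond this observation, and you have reproduced it faithfully.
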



In the literature one can also find a list version of the \threeccname{} problem, where each vertex $v$
is additionally equipped with a set $S_v \subseteq \{\RR,\GG,\BB\}$; and we demand that the colouring we 
construct satisfies additionally $\phi(v) \in S_v$. 
It is known that the list version of the \threeccname{} problem can be reduced to the original \threeccname{}
problem for instance using gadgets described in Appendix~\ref{sec:stubborn}.

The \threeccname{} problem came to our attention when posted by Marx 
in the open problems list from Dagstuhl Seminar 09511 on 
{\em Parameterized complexity and approximation algorithms}~\cite{dagstuhl}.
Marx suspected that Fixed Parameter Tractability tools and intuitions may be useful
either to design a polynomial time algorithm or a quasi-polynomial lower bound.
While the final version of the algorithm is elementary and uses no tools from the
parametrized complexity setting, our reasoning was heavily influenced by a 
technique called {\em iterative compression}, developed by Reed et al. \cite{reed:ic}. 



\paragraph{Outline of the paper}
In Section~\ref{sec:2col} we investigate the structure of
solutions for the $2$-\ccname{} problem
(i.e., finding a split graph structure).
In Section~\ref{sec:algorithm} we present our algorithm
where Section~\ref{sec:correctness} is devoted to its correctness
and Section~\ref{sec:time} to its time complexity.
The correctness of our algorithm is not hard,
hence an advanced reader may skip this section.
However, the proof of the time complexity of our algorithm is not trivial
and relies on interesting combinatorial facts included in Lemma \ref{lem:splitCF}.

We were unable to find a reduction from the \stubname{} to the \threeccname{} problem 
in literature. Hence for the sake of completeness,
we present our own reduction in Appendix~\ref{sec:stubborn}.

\paragraph{Notation}
We assume that we are given an input to the \threeccname{} problem:
an undirected complete graph $G=(V,E)$ with a colouring of edges $\CC:E \to \{\RR,\GG,\BB\}$
(we denote the colours by $\RR$, $\GG$ and $\BB$). For a subset of vertices $X \subseteq V$
by $G[X]$ we denote the subgraph induced by $X$.
For a subset of edges $E' \subseteq E$ by $V(E')$ we denote the set of all endpoints of edges in $E'$.
Similarly, for a subset of vertices $V' \subseteq V$ by $E(V')$ we denote the set of edges
with both endpoints in the set $V'$.

\section{Colouring with two colours --- preliminaries}\label{sec:2col}
We first consider the structure of \twoccname. 
Let $W$ be such a set of vertices that $\CC$ restricted to $E(W)$ has only two values, say $\RR$ and $\BB$. 
We look for all feasible colourings $\phi: W \ra \{\RR,\BB\}$. 

\begin{definition} 
We say a vertex $v \in W$ is {\em interesting} if there exist two feasible colourings $\phi_1$, $\phi_2$ of $W$ into $\RR$ and $\BB$ such that $\phi_1(v) = \RR$ and $\phi_2(v) = \BB$. 
Otherwise a vertex is {\em boring}. 
\end{definition}

In particular, if there is no feasible colouring of $W$, all vertices of $W$ are boring.

\begin{lemma}\label{lem:boring} 
Let $u,v,w$ be three such vertices in $W$ that the $\CC(uv) = \CC(vw) \neq \CC(uw)$. 
Then $v$ is boring, as it does not admit a feasible colouring with $\phi(v) = \CC(vw)$. 
\end{lemma}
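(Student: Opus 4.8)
The plan is to argue by contradiction: assume there is a feasible colouring $\phi : W \ra \{\RR,\BB\}$ with $\phi(v) = \CC(vw)$, and derive a violated edge. Write $c = \CC(uv) = \CC(vw)$ and $c' = \CC(uw)$, so that $\{c,c'\} = \{\RR,\BB\}$ and $c \neq c'$.

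First I would push the constraint through the edge $uv$. Since $\phi(v) = c = \CC(uv)$, feasibility of $\phi$ on the edge $uv$ forces $\phi(u) \neq \CC(uv) = c$, hence $\phi(u) = c'$. By the symmetric argument on the edge $vw$, from $\phi(v) = c = \CC(vw)$ we get $\phi(w) \neq c$, hence $\phi(w) = c'$.

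Now I would look at the edge $uw$: we have just shown $\phi(u) = c' = \CC(uw)$ and $\phi(w) = c' = \CC(uw)$, so both endpoints of $uw$ receive the colour $\CC(uw)$, contradicting feasibility of $\phi$. This shows no feasible colouring of $W$ assigns $v$ the colour $\CC(vw)$; in particular every feasible colouring (if any exists) gives $\phi(v) = c'$, so $v$ cannot be $\RR$ in one feasible colouring and $\BB$ in another, i.e.\ $v$ is boring by definition.

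There is no real obstacle here — the whole argument is a two-step constraint propagation followed by the observation that the third edge is then monochromatic in its own colour; the only point needing a little care is bookkeeping which of $\RR,\BB$ plays the role of $c$ versus $c'$, which is handled uniformly by the notation above.
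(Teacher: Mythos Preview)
Your argument is correct and is essentially the same as the paper's: both fix the common colour of $uv$ and $vw$, propagate the constraint from $v$ to force $\phi(u)=\phi(w)$ to be the other colour, and then observe this violates the edge $uw$. The only cosmetic difference is that the paper assumes without loss of generality $\CC(uv)=\CC(vw)=\RR$ and $\CC(uw)=\BB$, whereas you keep the colours abstract as $c,c'$.
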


\begin{proof} 
Assume without loss of generality that $\CC(uv) = \CC(vw) = \RR$ and $\CC(uw) = \BB$. 
Assume there is a feasible colouring $\phi$ of $\{u,v,w\}$ in which $\phi(v) = \RR$. 
Then we would have to have $\phi(u) = \phi(w) = \BB$ (as $\CC(uv) = \CC(vw) = \RR$), but this contradicts $\CC(uw) = \BB$. 
As any feasible colouring of $W$ restricted to $\{u,v,w\}$ is a feasible colouring of $\{u,v,w\}$, $v$ cannot be interesting.
\end{proof}

\begin{lemma}\label{lem:inter}\label{lem:boringalg} 
Let $I \subseteq W$ be the set of interesting vertices in $W$. 
Then $\CC$ restricted to $E(I)$ has only one value (that is all the edges in $E(I)$ are of a single colour).

Moreover, there exists an algorithm which either finds a boring vertex and the colour it cannot have, or returns NO if all vertices are interesting. The algorithm works in $O(|W|^2)$ time.
\end{lemma}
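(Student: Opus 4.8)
The plan is to exploit Lemma~\ref{lem:boring} to show that the "conflict graph" on $W$ has a very restricted structure, and then to argue that the set $I$ of interesting vertices is exactly what survives an iterated pruning process. First I would set up the natural auxiliary (di)graph: for each colour $c\in\{\RR,\BB\}$, say that a vertex $v$ is \emph{forbidden colour $c$} if no feasible colouring of $W$ assigns $\phi(v)=c$; by definition, $v$ is boring precisely when it is forbidden some colour (or when $W$ has no feasible colouring at all, in which case every vertex is forbidden both colours). Lemma~\ref{lem:boring} already gives one source of forbidden colours: if $u,v,w$ are a "bad triple" with $\CC(uv)=\CC(vw)=c\neq\CC(uw)$, then $v$ is forbidden colour $c$. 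The key additional observation is that forbidden colours \emph{propagate} exactly like 2-SAT implications: if $\CC(uv)=c$ and $v$ is forbidden the colour $c'\neq c$, then in any feasible colouring either $\phi(v)=c$ forcing $\phi(u)=c'$, or $\phi(v)=c'$ which is impossible — wait, more carefully: $\phi(v)=c'$ is forbidden, so $\phi(v)=c$, hence $\phi(u)\neq c$, i.e.\ $u$ is forbidden colour $c$. So "$v$ forbidden $c'$" together with edge colour $c$ on $uv$ forces "$u$ forbidden $c$"; this is the only propagation rule, and it plus the base case of Lemma~\ref{lem:boring} generates \emph{all} forbidden-colour facts (this is the standard completeness of unit propagation for the implicational 2-CNF one gets here, and it is where a short induction on a feasible colouring is needed).

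Granting that, the structural claim follows: suppose $x,y\in I$ with $\CC(xy)=\RR$, and suppose for contradiction some edge of $E(I)$ has colour $\BB$, say incident to some $z\in I$; I would trace a short path of same-coloured or differently-coloured edges among interesting vertices and derive, via the propagation rule, that one of the endpoints is forbidden a colour — contradicting that it is interesting. Concretely: if $x,y,z\in I$ and $\CC(xy)=\RR$ while $\CC(xz)=\BB$, consider the triple on $\{x,y,z\}$: whatever $\CC(yz)$ is, either it equals one of $\RR,\BB$ and makes a bad triple (Lemma~\ref{lem:boring}) killing one of $x,y,z$, contradiction. Hence no such $z$ exists adjacent to $x$ via a $\BB$-edge; iterating/using connectivity of $G[I]$ (it is complete!) this forces all of $E(I)$ to be monochromatic. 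Since $G[I]$ is a complete graph, actually the triple argument applied to any three vertices of $I$ directly shows all three pairwise edges have the same colour, giving the first assertion with essentially no iteration.

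For the algorithmic part, the plan is to run the propagation as a fixpoint computation: maintain for each vertex the set of colours it is currently known to be forbidden; initialize by scanning all $O(|W|^2)$ ordered pairs (equivalently triples through a common vertex, but one can detect bad triples in $O(|W|^2)$ by, for each vertex $v$, checking whether its $\RR$-neighbourhood and $\BB$-neighbourhood among $W$ are "inconsistent" in the Lemma~\ref{lem:boring} sense), then close under the single propagation rule with a worklist. Each vertex can enter the worklist at most twice (once per colour), and processing a vertex scans its $O(|W|)$ incident edges, for $O(|W|^2)$ total. If at the end some vertex has a nonempty forbidden set, output it together with a forbidden colour; otherwise every vertex is interesting and we return NO. The one subtlety to get the running time right — and the step I expect to be the main obstacle — is verifying that unit propagation is \emph{complete} for recognizing boring vertices (so that "empty forbidden sets everywhere" really does certify that all vertices are interesting), since it is conceivable a priori that a vertex is boring for a global reason not captured by the local rules; this is handled by the induction sketched above, taking any feasible colouring of $W$ and showing by induction on the propagation derivation that it respects every derived forbidden-colour fact, hence if $v$ has empty forbidden set one can in fact build feasible colourings realizing both $\phi(v)=\RR$ and $\phi(v)=\BB$. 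The edge-case where $W$ has no feasible colouring at all is consistent: there the propagation will derive some vertex forbidden both colours, and we report that vertex.
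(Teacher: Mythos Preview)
Your argument for the first assertion is correct and matches the paper's: any non-monochromatic triangle in $W$ contains a boring vertex by Lemma~\ref{lem:boring}, so every triple of interesting vertices is monochromatic, and since $G[I]$ is complete this forces all of $E(I)$ to have one colour.

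For the algorithmic part you are working far harder than necessary, and the extra machinery creates real gaps. The paper's algorithm is trivial by comparison: scan $E(W)$; if all edges have one colour, every vertex is interesting (colour everything else $\BB$ and $v$ arbitrarily), so return NO; otherwise you hold two edges of different colours, and from any such pair one extracts a non-monochromatic triangle in constant time (if they share an endpoint that triangle is immediate; if not, the edge $u_1u_2$ differs in colour from one of $u_1v_1$, $u_2v_2$, giving a non-monochromatic triangle on $\{u_1,u_2,v_1\}$ or $\{u_1,u_2,v_2\}$). That triangle hands you a boring vertex via Lemma~\ref{lem:boring}. No propagation, no fixpoint, and the $O(|W|^2)$ bound is just the edge scan.

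Two concrete issues with your route. First, your $O(|W|^2)$ initialization is not justified: deciding for each $v$ whether it is the apex of some bad triple asks whether its $\RR$-neighbourhood contains a $\BB$-edge (or symmetrically), and you have not explained how to do that in $O(|W|)$ per vertex rather than $O(|W|^2)$. Second, the ``completeness of unit propagation'' that you flag as the main obstacle is a red herring. You do not need propagation to find \emph{every} boring vertex; you only need that when propagation derives nothing, every vertex is interesting. But propagation derives nothing precisely when there is no bad triple at all, which (by exactly the triangle argument you already gave, applied now to $W$ rather than $I$) forces $E(W)$ to be monochromatic, and in that case every vertex is interesting by the direct construction above. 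So the obstacle dissolves, and with it the need for propagation: once you see that ``$E(W)$ monochromatic'' is equivalent to ``all vertices interesting'', the algorithm writes itself.
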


\begin{proof} 
If all the edges of $E(W)$ are of the same colour (without losing generality $\RR$), 
every vertex $v\in W$ is interesting, 
as when one sets $\phi(w)=\BB$ for $w\neq v$, 
then any value of $\phi(v)$ makes $\phi$ a feasible colouring. 
Therefore, in this case the answer of the algorithm is ,,NO''.
This check can be performed in $O(|W|^2)$ time.

Now assume we found two edges $u_1v_1$ and $u_2v_2$ of different colours.
If these edges share an endpoint, e.g. $v_1=v_2$, 
then there is a multicoloured triangle $(u_1,u_2,v_1)$. 
On the other hand if all the endpoints are different, then the edge $u_1u_2$ has a different colour from one of the edges $u_1v_1$, $u_2v_2$. 
Therefore, one of the triples $(u_1,u_2,v_1)$ or $(u_1,u_2,v_2)$ forms a multicoloured triangle. 
In each case the multicoloured triangle gives us a boring vertex with its inadmissible colour as in Lemma \ref{lem:boring} in constant time.
\end{proof}

\section{The algorithm}\label{sec:algorithm}

\subsection{Outline of the algorithm}

Let $v_1,v_2,\ldots,v_n$ to be an arbitrary order on $V$. 
Suppose we have an instance $(V,\CC)$ of the \threeccname{} problem. 
Let $V_i = \{v_1,v_2,\ldots,v_i\}$, and let $\CC_i$ be the restriction of $\CC$ to edges in $E(V_i)$. 
Notice that if $\phi$ is a solution for $(V_i,\CC_i)$, then $\phi$ restricted to $V_j$ is a solution to $(V_j,\CC_j)$ for any $j < i$. 
Thus, in particular, if there is a positive answer to $(V,\CC)$, then there is a positive answer to any $(V_i,\CC_i)$.

We proceed by building a solution for each $(V_i,\CC_i)$. 
Obviously we may start the induction with an empty set $V_0$ and empty function $\CC_0$.
If for some $i$ we show there is no solution, we return NO as an answer to the original $(V,\CC)$ instance. 
Moreover, when building the solution to $(V_i,\CC_i)$ we assume we are given some solution to $(V_{i-1},\CC_{i-1})$. 
Thus, we can focus on a situation in which we solve an instance $(V,\CC)$ and
we already have a feasible colouring $\phi_0$ for $(V \setminus \{v_0\}, \CC)$ for one fixed vertex $v_0$.
We use this feasible colouring $\phi_0$ to deeply exploit the colouring of the graph $G[V \setminus \{v_0\}]$
which is a crucial part in designing our algorithm.
This type of reasoning is one of the key parts of the aforementioned {\em iterative compression} technique used in the Fixed Parameter Tractability community.

In each step of the algorithm we have a division of $V$ into eighteen sets, six corresponding to each of the three colours. 
The algorithm is a branching algorithm --- we perform operations which either simply move the vertices around, or branch out into several instances. 
Then we resolve each branch recursively, and if we find a feasible colouring in any of them, we return this colouring, while if all the branches return NO, we return NO.
We follow a naming convention in which if $\cC$ is one of the colours in $\{\RR, \GG, \BB\}$, then $\cX$ and $\cY$ are the other two.

Consider any colour $\cC \in \{\RR,\GG,\BB\}$. 
The sets corresponding to this colour are $\CF$, $\CTD$, $\CS$, $\CTS$, $\CNX$ and $\CNY$. 
The intuitive meanings of these sets are as follows:
\begin{itemize}
\item $\CF$ --- the ``free'' vertices of colour $\cC$ --- those, which were of colour $\cC$ in $\phi_0$ and our algorithm has not yet gained any information about them;
\item $\CTD$ --- the ``to do'' vertices of colour $\cC$ --- those, which were of colour $\cC$ in $\phi_0$, but our algorithm already learned they will not be of colour $\cC$ in the new colouring;
\item $\CS$ --- the ``set'' vertices of colour $\cC$ --- those which our algorithm has already determined to be of colour $\cC$;
\item $\CTS$ --- the ``to set'' vertices of colour $\cC$ --- those which are determined to be of colour $\cC$ in the new colouring, but we have to update the current division of $V$  before we put them into $\CS$;
\item $\CNX$ and $\CNY$ --- the ``not $\cX$'' and ``not $\cY$'' vertices of colour $\cC$ --- those which were of colour $\cC$ in $\phi_0$, and we already know they will not be of colour $\cX$ (or $\cY$, respectively) in the new colouring.
\end{itemize}

This information can be represented by associating with each vertex the colour assigned to it by
$\phi_0$ and the subset $S(v) \subseteq \{\RR,\GG,\BB\}$ of colours which are still admissible as 
values of $\phi(v)$.
Such an approach would certainly streamline any implementation of the algorithm, but
we think that naming each set separately helps underline the role each particular set plays
in the analysis --- thus the choice of this method of presentation.

To start the algorithm we put $\phi_0^{-1}(\RR)$ into $\RF$, $\phi_0^{-1}(\BB)$ into $\BF$ and $\phi_0^{-1}(\GG)$ into $\GF$. 
There are three possible colours we can give to $v_0$, thus we branch out into three cases, putting $v_0$ into $\BTS$, $\RTS$ or $\GTS$.

The algorithm uses two subprocedures --- shifting a vertex (from $\CTS$ to $\CS$) and resolving a set $\CTD$. 
As long as any of the sets $\CTS$ is non--empty, we shift vertices from this set. 
If all sets $\CTS$ are empty, but there is a non--empty set $\CTD$, we resolve the set $\CTD$. 
If all the sets $\CTS$ and $\CTD$ are empty, we claim that setting $\phi(v) = \cC$ for $v \in \CS \cup \CF \cup \CNX \cup \CNY$ is a feasible solution and return it.

\subsection{Shifting a vertex}
The meaning of this step is that we have a vertex $v$ for which we have just determined that $\phi(v) = \cC$. 
This gives us some information about the vertices $w$ with $\CC(vw) = \cC$, which we represent by moving vertices between appropriate sets.
After including the gained information in our structure we can safely move $v$ into $\CS$.

Let $v \in \CTS$. 
The procedure of shifting a vertex works as follows: we move $v$ from $\CTS$ to $\CS$, and then consider all $w \in V$ such that $\CC(vw) = \cC$. 
For each such vertex $w$ we perform the appropriate action (in parentheses we give the intuitive meanings of the actions). As before, $\cX$ denotes any colour different than $\cC$ and $\cY$ denotes
the third colour different than $\cC$ and $\cX$.
\begin{itemize}
\item If $w \in \CS$ return NO from this branch (we have two vertices for which $\phi(v) = \phi(w) = \cC$ connected with an $\cC$--edge);
\item If $w \in \CF$ move $w$ to $\CTD$ ($w$ cannot be of colour $\cC$);
\item If $w \in \CNX$ move $w$ to $\YTS$, where $\cY$ is the third colour, that is $\{\cC,\cX,\cY\} = \{\RR,\GG,\BB\}$ (it is not of colour $\cC$ nor $\cX$, thus it is of colour $\cY$);
\item If $w \in \XTD$ move $w$ to $\YTS$, where $\cY$ is as above (again, $w$ is neither of colour $\cX$ nor $\cC$, so it is of colour $\cY$);
\item If $w \in \XF$ move $w$ to $\XNC$ ($w$ cannot be of colour $\cC$);
\item If $w \in \XNY$ move $w$ to $\XTS$ ($w$ cannot be of colour $\cY$ nor $\cC$);
\item If $w \in \CTD$, $w \in \CTS$, $w \in \XNC$, $w \in \XTS$ or $w \in \XS$, do nothing.
\end{itemize}

\subsection{Resolving a set}
Consider a non--empty set $\CTD$. 
The meaning of this step is that we have a set of vertices that were of colour $\cC$ in $\phi_0$, but we see they cannot be of colour $\cC$ in $\phi$.
Thus, there are no $\cC$--edges in $E(\CTD)$, and we have to colour $\CTD$ into the two remaining colours.
If there are any boring vertices in $\CTD$, we know how to colour them, so we move them to appropriate $\TS$ sets and go back to shifting vertices.
If all vertices in $\CTD$ are interesting, we find all possible colourings of $\CTD$ and branch out.

We prove formally that $E(\CTD)$ contains no edges of colour $\cC$ in Section \ref{sec:correctness}.
Apply the algorithm from Lemma \ref{lem:boringalg} to $\CTD$.
If we find any boring vertex $v$ which does not admit colour $\cX$, we move it to $\YTS$ and finish the resolving step.
If all vertices in $\CTD$ are interesting, we branch out into $|\CTD| + 1$ cases.
We know that all the edges of $E(\CTD)$ are of one colour by Lemma \ref{lem:inter}.
We check a single edge to find out which colour it is, without loss of generality assume it is $\cX$. If $|\CTD|=1$ and such an edge does not exist, it does not matter which colour 
different than $\cC$ we choose.
In one branch we move the whole set $\CTD$ to $\YTS$.
In the other $|\CTD|$ branches we choose one vertex $v \in \CTD$, a different one in each branch, and move this vertex to $\XTS$ and all the other vertices to $\YTS$. Note that 
these branches correspond to all feasible colourings of $\CTD$ using colours different than $\cC$.
Then we solve each branch recursively, if any of them returns a feasible colouring, we return it, while if all of them return NO, we return NO.

\section{Correctness of the algorithm}\label{sec:correctness}

We formally prove the correctness of the algorithm given in Section \ref{sec:algorithm}. A reader accustomed to such algorithms may probably only glance over this section and fill in the necessary details by him- or herself.

Formally, we do not yet know that the algorithm always terminates. In order to clarify the proof, we now assume that this indeed holds. 
In Section \ref{sec:time} we justify this assumption by showing even polynomial bounds on the algorithm's working time.

\begin{definition} We say a division of $V$ into the eighteen sets satisfies {\em proper invariants} if
\begin{enumerate}
\item For each colour $\cC$ and for any $e \in E(\CS \cup \CF \cup \CNX \cup \CNY)$ we have $\CC(e) \neq \cC$;
\item For each colour $\cC$ and for any $e \in E(\CTD \cup \CF \cup \CNX \cup \CNY)$ we have $\CC(e) \neq \cC$;
\end{enumerate}
\end{definition}

\begin{definition}
A colouring $\phi: V \ra \{\RR,\GG,\BB\}$ is said to be {\em proper} with respect to a division of $V$ into the eighteen sets if for every colour $\cC$ it satisfies
\begin{itemize}
\item $\phi(v) \neq \cC$ for $v \in \XNC, \YNC, \CTD$;
\item $\phi(v) = \cC$ for $v \in \CTS$, $v \in \CS$;
\end{itemize}
\end{definition}

We prove that the division at each step of our algorithm satisfies proper invariants.
Moreover, we prove that if there exists a proper solution $\phi$, then our algorithm does not return NO.

\subsection{Proper invariants}
Note that as $\phi_0$ was a feasible colouring for $V \setminus \{v_0\}$, the proper invariants are satisfied at the start of the algorithm.

We have to check that the operations of shifting a vertex and resolving a set do not spoil proper invariants. 

Firstly, we consider shifting a vertex.
Assume we shift a vertex $v$ from $\CTS$ to $\CS$.
Begin by considering the moves of vertices $w$ with $\CC(vw) = \cC$. 
The moves $\CNX \ra \YTS$, $\XTD \ra \YTS$ and $\XNY \ra \XTS$ cannot spoil proper invariants since the sets $\CTS$ are not involved in the invariants. 
Returning NO obviously does not spoil proper invariants. 
The move $\CF \ra \CTD$ decreases the number of constraints in the invariants, and $\XF \ra \XNC$ does not change the invariants.

As far as the move of $v$ from $\CTS$ to $\CS$ is concerned, if there were any vertices $w \in \CS \cup \CNX \cup \CNY \cup \CF$ such that $\CC(vw)=\cC$, the shifting algorithm removes them from the set (or returns NO for $w \in \CS$).

Thus after shifting a single vertex proper invariants still hold.

Resolving a set involves only moving vertices to the $\TS$ sets, which are not constrained in the invariants, so it does not spoil the invariants as well.

\subsection{Existence of a solution}
Assume that at a given stage of the algorithm there is a proper colouring $\phi$, which is a feasible solution to $(V,\CC)$. 
We prove that after performing a single step $\phi$ is still proper in at least one branch.

First consider shifting a vertex $v$ from $\CTS$ to $\CS$. 
As $v$ was in $\CTS$ and $\phi$ is proper, $\phi(v) = \cC$.
Thus after moving $v$ from $\CTS$ to $\CS$ the solution $\phi$ is still proper.
Consider any vertex $w$ with $\CC(vw) = \cC$.
Then $\phi(w) \neq \cC$. 
If $w \in \CS$ we have a contradiction as $\phi$ being a proper solution implies $\phi(w) = \cC$.
If $w$ is moved to $\CTD$ or $\XNC$ (from $\CF$ or $\XF$, respectively), $\phi$ is still a proper solution, for the only new constraint is that $\phi(w) \neq \cC$, which we know to be satisfied.
If $w$ was in $\CNX$, $\XTD$ or $\YNX$, then $\phi(w) \neq \cX$ as $\phi$ was proper.
As we additionally know that $\phi(w) \neq \cC$, this implies $\phi(w) = \cY$, thus after moving $w$ to $\YTS$ the solution $\phi$ remains proper.
Thus $\phi$ is still proper after shifting a vertex.

Now consider resolving a set $\CTD$. 
As $\phi$ is proper, $\phi(v) \neq \cC$ for any $v \in \CTD$. 
On the other hand, the proper invariants guarantee that $\CC(e) \neq \cC$ for $e \in E(\CTD)$. 
Thus the application of Lemma \ref{lem:boringalg} is justified. 
If there exists a boring $v \in \CTD$, which --- according to the algorithm from Lemma \ref{lem:boringalg} --- cannot have $\phi(v) = \cX$ for any feasible colouring, we have $\phi(v) = \cY$.
Thus after moving $v$ to $\YTS$ the solution $\phi$ remains proper.

If all vertices are interesting, then by Lemma \ref{lem:inter} all the edges in $E(\CTD)$ are of a single colour, say $\cX$, thus at most one vertex $v \in \CTD$ satisfies $\phi(v) = \cX$.
If there exists such a vertex, $\phi$ is a proper colouring for the branch in which we move $v$ to $\XTS$ and all the other vertices from $\CTD$ to $\YTS$.
If no such vertex exists, $\phi$ is a proper colouring for the branch in which we move all vertices to $\YTS$.

Now assume that there exists any solution $\phi$ for the original problem $(V,\CC)$. 
Let $\cC = \phi(v_0)$. 
Then $\phi$ is proper in the starting branch in which we set $v_0 \in \CTS$ --- we have $\phi(v_0) = \cC$, and all the other vertices are in the sets $\F$, so we assume nothing about them. 

So, finally --- if there exists a solution for the original problem, our algorithm returns a solution.
On the other hand, if our algorithm returns a solution, sets $\CTD$ and $\CTS$ are empty and the first proper invariant guarantees that it is a feasible solution to the original problem.

This allows us to formulate the following theorem:
\begin{theorem} 
Consider an instance $(V,\CC)$ of compatible colouring, and assume we are given a feasible colouring $\phi$ for $(V \setminus \{v\}, \CC)$. 
Then if there exists any feasible colouring for $(V,\CC)$, the algorithm described in Section \ref{sec:algorithm} returns a colouring, and conversely any colouring returned by the algorithm is a feasible one for $(V,\CC)$.
\end{theorem}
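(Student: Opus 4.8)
The plan is to prove the two implications of the statement separately, each time assembling facts already established in this section. Throughout I will use that the eighteen sets form a partition of $V$ at every step of the algorithm; this needs only a one-line check that the initial placement covers $V$ and that every elementary move (both in shifting a vertex and in resolving a set) merely relocates a single vertex from one set to another, hence preserves the partition property.

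For the ``conversely'' direction (soundness), I would start from the fact that the algorithm outputs a colouring only when every set $\CTS$ and every set $\CTD$ is empty. At that moment $V$ is partitioned into the twelve blocks $\CS \cup \CF \cup \CNX \cup \CNY$ for $\cC \in \{\RR,\GG,\BB\}$, so the function $\phi$ defined by $\phi(v)=\cC$ on the $\cC$-block is well defined. To see that it is feasible, fix a colour $\cC$ and an edge $e$ both of whose endpoints receive colour $\cC$; then $e \in E(\CS \cup \CF \cup \CNX \cup \CNY)$, and the first proper invariant --- valid at every step by the ``Proper invariants'' subsection --- gives $\CC(e) \neq \cC$. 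Hence no edge is monochromatic with its own colour, i.e. $\phi$ is feasible for $(V,\CC)$.

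For the forward direction (completeness), suppose $(V,\CC)$ has a feasible colouring $\psi$, and --- as assumed in this section, the finiteness of the recursion being supplied in Section~\ref{sec:time} --- the algorithm halts. Set $\cC := \psi(v)$ and follow the starting branch that places $v$ into $\CTS$: $\psi$ is proper there, the only constraints being $\psi(v)=\cC$ and none on the remaining vertices. By the ``Existence of a solution'' subsection, one step of the algorithm keeps $\psi$ proper in at least one of the branches it creates (for resolving a set this uses Lemma~\ref{lem:inter} to see that branching over the single vertex of colour $\cX$ together with the all-$\cY$ branch exhausts the feasible colourings of $\CTD$); descending repeatedly into such a branch, $\psi$ remains proper down to a leaf. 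Crucially, along this descent the algorithm never answers NO: the unique place NO is emitted is while shifting some $u$ out of $\CTS$ upon meeting $w \in \CS$ with $\CC(uw)=\cC$, but then properness forces $\psi(u)=\psi(w)=\cC$ on the $\cC$-edge $uw$, contradicting feasibility of $\psi$; the resolving step, even when all vertices are interesting, only branches. A finite run that never answers NO must return a colouring, which by the soundness direction is feasible for $(V,\CC)$.

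The conceptual content is entirely contained in the two subsections already written; the main things to watch in stitching them together are the partition bookkeeping above and the check that ``NO'' is produced in exactly one spot of the shifting routine. The one honest gap is termination --- needed for ``descend into a proper branch until a leaf'' to make sense --- and that is precisely what Section~\ref{sec:time} closes; until then it is taken as an assumption, as the paper does.
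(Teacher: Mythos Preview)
Your proposal is correct and follows essentially the same approach as the paper: the theorem is assembled from the two preceding subsections, with the first proper invariant yielding soundness of any returned colouring and the ``proper solution survives in some branch'' argument yielding completeness. You are somewhat more explicit than the paper about the partition bookkeeping and about isolating the single place where NO can be emitted, but the structure and content match.
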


\section{Time complexity bounds}\label{sec:time}

Let us denote $|V|$ by $n$.
Consider a tree of recursion for our algorithm.
We actually consider three recursion trees, one for each possible choice of the set $\CTD$ to put $v_0$ into.
\begin{definition}
By a {\em state} $S$ of the algorithm we mean a division of the set $V$ into the eighteen sets postulated by the algorithm. We denote these 18 sets by $\CF(S)$, $\CTD(S)$, and so on, omitting the argument when it is clear what state we are considering.

By an {\em inner node} of the recursion tree we mean the state of the algorithm at a moment just before branching out when resolving a set $\CTD$ with no boring vertices.

By a {\em leaf node} of the recursion tree we mean the state of the algorithm when it terminates a branch 
--- either answering NO due to a failed shift operation or returning a solution due to the sets $\CTD$ and $\CTS$ all being empty.
For the sake of analysis it is better to assume that when answering NO we first shift all vertices out the $\CTS$ sets, disregarding conflicts, and move the vertices required by the shift.
Thus we answer NO in the state when all sets $\CTS$ are empty.

By the {\em descendants} of an inner node $N$ we mean nodes that occur in any of the branches of resolving $\CTD$ in $N$.
This obviously gives rise to a tree structure in each of the three recursion trees, so we use the standard terms ``child'', ``father'', ``root'' and so on.
\end{definition}
Each branching out takes $O(n^2)$ time for the application of Lemma \ref{lem:boringalg} (not counting the time needed to solve the branches), and in total $O(n^2)$ time to prepare the branches.
Between an inner node and its child a number of operations are performed, each being either shifting a single vertex (which takes $O(n)$ time) or resolving a set containing boring vertices (which takes $O(n^2)$ time).

\subsection{Length of branches}

\begin{definition} 
The {\em potential} of a given state $S$ of the algorithm is equal to
$$\sum_{\cC \in \{\RR,\GG,\BB\}} 3|\CF(S)| + 2|\CTD(S)| + 2|\CNX(S)| + 2|\CNY(S)| + |\CTS(S)|.$$
\end{definition}

\begin{lemma} 
Shifting a single vertex and resolving a set containing a boring vertex decreases the potential. 
\end{lemma}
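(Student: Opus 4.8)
The plan is to check each of the two operations separately and verify that every vertex-move it performs either strictly decreases the potential or leaves it unchanged, with at least one move producing a strict decrease. It helps to tabulate, for each admissible move ``$A \to B$'' used by the algorithm, the change in potential; since the potential is a sum of per-vertex weights, moving a single vertex from a set of weight $w_A$ to a set of weight $w_B$ changes the potential by $w_B - w_A$. Reading off the definition, the weights are: $3$ for each $\F$ set, $2$ for each $\TD$, $\NX$, $\NY$ set, $1$ for each $\TS$ set, and $0$ for each $\S$ set.

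\textbf{Shifting a vertex.} Suppose we shift $v$ from $\CTS$ to $\CS$: this move contributes $0 - 1 = -1$, a strict decrease already. Now I would go through the moves the shift performs on the neighbours $w$ with $\CC(vw) = \cC$. The move $\CF \to \CTD$ contributes $2 - 3 = -1$; the move $\XF \to \XNC$ contributes $2 - 3 = -1$; the move $\CNX \to \YTS$ contributes $1 - 2 = -1$; the move $\XTD \to \YTS$ contributes $1 - 2 = -1$; the move $\XNY \to \XTS$ contributes $1 - 2 = -1$. Every neighbour move is therefore non-increasing (in fact each is a strict decrease), and returning NO or doing nothing changes nothing. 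Adding the guaranteed $-1$ from moving $v$ itself, the total change is at most $-1$, so the potential strictly decreases.

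\textbf{Resolving a set with a boring vertex.} Here the algorithm finds a boring vertex $v \in \CTD$ and moves it to $\YTS$; this single move contributes $1 - 2 = -1$, and nothing else in $V$ is touched, so the potential strictly decreases by exactly $1$.

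\textbf{Main obstacle.} There is no real obstacle: the proof is a finite case check against the weight table. The only point requiring a little care is to make sure the move list in the two subprocedures is exhausted — in particular that every ``$w \in \ldots$, do nothing'' case of the shift operation genuinely leaves $w$ in place (and hence contributes $0$), and that the boring-vertex branch of ``resolving a set'' really performs only the one move $\CTD \to \YTS$ before returning to the shifting phase. Once one confirms that each listed move is among those tabulated above with a non-positive (indeed negative) contribution, the lemma follows immediately, since in each of the two operations at least one move contributes $-1$.
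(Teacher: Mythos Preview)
Your argument is correct and is essentially the same as the paper's: you verify that the guaranteed move $\CTS\to\CS$ (for a shift) or $\CTD\to\YTS$ (for a boring resolve) drops the potential by~$1$, and that every other move listed in the shift procedure has non-positive (in fact negative) contribution. The only difference is cosmetic --- you compute each $w_B-w_A$ explicitly, whereas the paper simply asserts the remaining moves ``do not increase the potential''.
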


\begin{proof} 
The move $\CTS \ra \CS$, which happens every time we shift a vertex, decreases the potential by $1$. 
The same holds for the move $\CTD \ra \XTS$, which happens every time we resolve a set with a boring vertex.
All the other moves associated shifting a vertex ($\CF \ra \CTD$, $\CNX \ra \YTS$, $\XTD \ra \YTS$, $\XF \ra \XNC$ and $\XNY \ra \XTS$) do not increase the potential.
\end{proof}

\begin{lemma}\label{lem:branch-non-increase}
When we branch out while resolving a set without boring vertices, the potential in each of the branches is smaller than the potential in the original state. 
\end{lemma}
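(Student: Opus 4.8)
The plan is to analyze exactly which sets change when we resolve a set $\CTD$ with no boring vertices and compare the potential before and after. Recall that in this situation all vertices of $\CTD$ are interesting, so by Lemma \ref{lem:inter} all edges of $E(\CTD)$ have a single colour, say $\cX$. The set $\CTD$ has some size $t = |\CTD| \ge 1$, and we branch into $t+1$ cases: in one branch all of $\CTD$ moves to $\YTS$, and in each of the other $t$ branches one chosen vertex moves to $\XTS$ and the remaining $t-1$ vertices move to $\YTS$. No other set changes in any branch, so the only relevant terms in the potential are those involving $\CTD$, $\XTS$ and $\YTS$.

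First I would handle the ``all to $\YTS$'' branch. Here $t$ vertices leave $\CTD$ (each contributing $2$ to the potential) and enter $\YTS$ (each contributing $1$). Since $t \ge 1$, the potential strictly decreases (by exactly $t$). Next I would handle each of the remaining $t$ branches. Here one vertex leaves $\CTD$ (losing $2$) and joins $\XTS$ (gaining $1$), and $t-1$ vertices leave $\CTD$ (losing $2(t-1)$) and join $\YTS$ (gaining $t-1$); the net change is $-2 + 1 - 2(t-1) + (t-1) = -1 - (t-1) = -t < 0$. In every branch the potential drops by exactly $t = |\CTD| \ge 1$, which in particular is strictly positive.

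The main thing to be careful about — rather than a genuine obstacle — is making sure that these are the only sets that change: the resolving step as described only moves vertices of $\CTD$ into $\TS$ sets and touches nothing else, and $\CXTS$, $\CYTS$ do not appear with a coefficient larger than $1$ in the potential, so no other term can compensate for the loss. One should also note the degenerate sub-case flagged in the algorithm, where $|\CTD| = 1$ and $E(\CTD) = \emptyset$ so the colour $\cX$ is chosen arbitrarily; this does not affect the count, since with $t = 1$ the ``all to $\YTS$'' branch and the single ``one vertex to $\XTS$'' branch both decrease the potential by $1$. Hence in all branches the potential is strictly smaller than in the original state, as claimed.
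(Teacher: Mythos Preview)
Your proposal is correct and follows exactly the same idea as the paper: each vertex moved from $\CTD$ (weight $2$) to some $\TS$ set (weight $1$) lowers the potential by one, and since the set being resolved is non--empty the drop is strictly positive. The paper states this in two lines without the per--branch case split or the explicit arithmetic, but the content is identical.
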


\begin{proof}
We resolve only non--empty sets.
We move all vertices from $\CTD$ to $\XTS$ or $\YTS$, each such move decreases the potential by one.
\end{proof}

The starting potential is $O(n)$, and decreases with each operation. Thus we have the following corollary:

\begin{corollary} \label{cor:pathlen}
We perform $O(n)$ operations (i.e., shifts, resolves of boring vertices or branches)
on each path from a starting node to any leaf of the recursion tree.
\end{corollary}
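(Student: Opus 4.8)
The statement to prove is Corollary~\ref{cor:pathlen}: along any root-to-leaf path in a recursion tree we perform $O(n)$ operations. The plan is to track the potential $\Phi(S)$ defined above along such a path and argue that (i) every operation strictly decreases $\Phi$, (ii) each decrease is by at least a fixed positive integer (in fact by $\geq 1$, since all moves shift integer-sized sets), and (iii) the initial value of $\Phi$ is $O(n)$ while $\Phi \geq 0$ always. Combining these three facts bounds the number of operations on the path by the initial potential, which is $O(n)$.

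\textbf{Key steps, in order.} First I would note that $\Phi(S)\ge 0$ trivially, since it is a nonnegative integer combination of set cardinalities. Second, I would invoke the two lemmas already proved: the lemma stating that shifting a single vertex or resolving a set with a boring vertex strictly decreases $\Phi$, and Lemma~\ref{lem:branch-non-increase} stating that each branch produced when resolving a boring-vertex-free set has strictly smaller potential than the parent state. Since every operation on the path (shift, resolve-with-boring-vertex, or branch) moves vertices between the eighteen sets in integer quantities, each such strict decrease is in fact a decrease by at least $1$ — this is the point I would make explicit, as it is what turns "strictly decreasing" into a genuine length bound. Third, I would compute the starting potential: at the root we have put $\phi_0^{-1}(\RR),\phi_0^{-1}(\BB),\phi_0^{-1}(\GG)$ into the respective $\F$ sets and one vertex $v_0$ into some $\TS$ set, so $\Phi = 3(n-1) + 1 = O(n)$. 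Finally, since each of the $k$ operations along a root-to-leaf path drops $\Phi$ by at least $1$ and $\Phi$ stays $\ge 0$, we get $k \le \Phi(\text{root}) = O(n)$, which is exactly the claim.

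\textbf{Main obstacle.} The only real subtlety is making sure the "at least $1$" decrease genuinely holds for \emph{every} operation counted, including the branching operations, and that we have not miscounted what an "operation" is. For branching, Lemma~\ref{lem:branch-non-increase} already gives strict decrease in every branch because we resolve only non-empty sets, so at least one vertex moves out of $\CTD$; hence the drop is $\ge 1$ there too. For the bookkeeping convention that on a NO answer we first shift all vertices out of the $\CTS$ sets "disregarding conflicts", I would observe that these extra shifts also only decrease the potential, so they do not affect the bound — if anything they are already accounted for. Thus the argument is a clean potential-function telescoping with no hidden case analysis; the work has essentially been done in the preceding lemmas, and this corollary is just their assembly.
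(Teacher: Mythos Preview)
Your proposal is correct and follows exactly the paper's approach: the paper's entire argument for this corollary is the single sentence ``The starting potential is $O(n)$, and decreases with each operation,'' which is precisely the potential-telescoping you describe, and your added remark that the potential is integer-valued (so each strict decrease is $\geq 1$) only makes explicit what the paper leaves implicit.
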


\subsection{Number of leaves}

We begin by formulating the lemma which is crucial to estimating the number of leaves:

\begin{lemma} \label{lem:splitCF}
Consider any inner node $S$ of the recursion tree formed immediately before resolving a set $\CTD$ containing no boring vertices.
Let $S_0, \ldots, S_k$ be the children of $S$.
Let $\CU_i = \CF(S_i) \cup \CTD(S_i)$.
Then the sets $\CU_i$ are disjoint subsets of the set $\CF(S)$.
\end{lemma}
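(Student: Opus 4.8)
Fix $\cC$ to be the colour whose to--do set is resolved at $S$, so that the set called $\CTD$ in the statement is $\CTD(S)$; write $\CTD(S)=\{v_1,\dots,v_k\}$. Recall the shape of the branching: the child $S_0$ is reached along the branch that moves all of $\CTD(S)$ into $\YTS$, and the child $S_m$ (for $1\le m\le k$) along the branch that moves $v_m$ into $\XTS$ and every other $v_j$ into $\YTS$; here $\cX$ is the common colour of the edges of $E(\CTD(S))$ supplied by Lemma~\ref{lem:inter} when $|\CTD(S)|\ge 2$, and an arbitrary colour distinct from $\cC$ when $|\CTD(S)|=1$. On the path from $S$ to any child only shifts and resolutions of sets \emph{containing} boring vertices are executed, and at $S$ (an inner node) as well as at each child (inner node or leaf) all the $\TS$ sets are empty.

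\textbf{Structural step.} I would first record, by inspecting the case list of the shift procedure (and the trivial effect on the $\cC$--sets of resolving sets with boring vertices), how a vertex of $\phi_0$--colour $\cC$ can migrate among the six $\cC$--sets: no operation ever moves a vertex \emph{into} $\CF$; the only move \emph{into} $\CTD$ is $\CF\to\CTD$ during a shift of a $\cC$--coloured vertex along a $\cC$--edge; and every move \emph{out} of $\CF\cup\CTD$ lands in one of $\CNX,\CNY,\XTS,\YTS$, from which no later move ever returns to $\CF\cup\CTD$. Hence a vertex $u\in\CU_i=\CF(S_i)\cup\CTD(S_i)$ lies in $\CF\cup\CTD$ at \emph{every} state between $S$ and $S_i$; in particular $u\in\CF(S)\cup\CTD(S)$. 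But $u\notin\CTD(S)$, because the branching step removes every vertex of $\CTD(S)$ into $\XTS$ or $\YTS$, i.e. out of $\CF\cup\CTD$, and it never comes back. Therefore $u\in\CF(S)$, which proves $\CU_i\subseteq\CF(S)$.

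\textbf{Disjointness.} Next I would read off, for $u\in\CU_i$, the colours of the edges $uv_1,\dots,uv_k$. Fix $m\ge 1$ and $u\in\CU_m$. Since $u\in\CF(S)$ and $v_m\in\CTD(S)$, the second proper invariant gives $\CC(uv_m)\ne\cC$. In branch $S_m$ the vertex $v_m$ carries colour $\cX$, so it is shifted at some point before $S_m$ is reached; at that moment $u\in\CF\cup\CTD$ by the structural step, and if $\CC(uv_m)=\cX$ this shift would move $u$ to $\CNX$ (if $u\in\CF$) or to $\YTS$ (if $u\in\CTD$) --- impossible. Hence $\CC(uv_m)=\cY$. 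For $j\ne m$ the vertex $v_j$ carries colour $\cY$ in branch $S_m$ and is likewise shifted; if $\CC(uv_j)=\cY$ this shift would move $u$ to $\CNY$ or to $\XTS$, again impossible, so with the invariant $\CC(uv_j)=\cX$. The same argument applied to $S_0$, in which every $v_j$ carries colour $\cY$, shows that $u\in\CU_0$ forces $\CC(uv_j)=\cX$ for all $j$. Now suppose $u\in\CU_i\cap\CU_{i'}$ with $i\ne i'$ and pick $m\in\{i,i'\}$ with $m\ge 1$. From $u\in\CU_m$ we get $\CC(uv_m)=\cY$, while from $u$ lying in the other child (be it $\CU_0$ or some $\CU_{i''}$ with $i''\ne m$) we get $\CC(uv_m)=\cX$ --- a contradiction. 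Hence the $\CU_i$ are pairwise disjoint.

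\textbf{The obstacle.} The only genuinely fiddly point is the structural step: checking directly from the (colour--symmetric) shift rules that a $\phi_0$--colour--$\cC$ vertex which has once left $\CF\cup\CTD$ can never re--enter it, and that resolutions of sets with boring vertices respect this. One also has to be mildly careful with the leaf convention (a NO--leaf is still obtained by shifting everything out of the $\TS$ sets, so the structural step applies verbatim there) and with the degenerate case $|\CTD(S)|=1$. Everything else is bookkeeping together with one use each of Lemma~\ref{lem:inter} and of the second proper invariant.
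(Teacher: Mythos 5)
Your proof is correct and follows essentially the same route as the paper's: you use Lemma~\ref{lem:inter} to fix the single colour $\cX$ on $E(\CTD(S))$, the second proper invariant to rule out $\cC$--edges between $\CF(S)$ and $\CTD(S)$, and then track which vertices are knocked out of $\CF\cup\CTD$ when the $v_j$'s are shifted; your condition $\CC(uv_m)=\cY$ is exactly membership in the paper's set $A_m$, so your disjointness argument coincides with theirs. Your ``structural step'' is a slightly more careful justification of the containment $\CU_i\subseteq\CF(S)$ than the paper's one-line appeal to ``the definition of resolving a set'', but it is the same observation the paper uses later in the proof.
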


\begin{proof}
The sets $\CU_i$ are subsets of $\CF(S)$ by the definition of resolving a set.
By application of Lemma \ref{lem:inter} we conclude that $E(\CTD(S))$ contains edges of a single colour (different than $\cC$ due to the proper invariants), say $\cX$ (if $\CTD(S)$ consists of a single vertex, take as $\cX$ any colour different than $\cC$).
Denote the vertices of $\CTD$ by $v_1, v_2, \ldots, v_k$. Without losing generality assume that $S_0$ corresponds to the branch where the whole $\CTD$ is moved to $\YTS$, while $S_i$ for $i\geq 1$ corresponds to the branch where the vertex $v_i$ is the only one moved to $\XTS$.
Let $A_i$ be the set of those vertices $w$ in $\CF(S)$ for which $\CC(v_iw) = \cY$.
By the second proper invariant we know that for $w \in \CF(S) \setminus A_i$ we have $\CC(v_iw) = \cX$ --- there are no $\cC$--edges in $\CF \cup \CTD$.

Consider the branch in which we move the whole set $\CTD$ to $\YTS$. 
When we shift any vertex $v_i$ to $\YS$, every vertex $w \in A_i$ that was still left in $\CF$ is moved to $\CNY$.
Similarly, every vertex $w\in A_i$ now contained in $\CTD$ is moved to $\XTS$.
Also, neither a shift nor resolving a boring vertex moves any vertex into $\CF \cup \CTD$.
Thus after all the $k$ shifts of vertices that were in $\CTD$ before branching, $\CU_0$ is disjoint from $A_1 \cup A_2 \cup \ldots \cup A_k$.
Similarly, in the branch where $v_i$ is moved to $\XTS$ and the other $v_j$s are moved to $\YTS$, after all the shifts $\CU_i$ is disjoint from $A_1 \cup A_2 \cup \ldots \cup A_{i-1} \cup (\CF \setminus A_i) \cup A_{i+1} \cup \ldots \cup A_k$.

Consider any two branches and the associated sets $\CU_i$. 
Assume the first of these branches moved the $j$th vertex to $\XTS$ (at least one of them had to move some vertex to $\XTS$). 
Then $\CU_j$ for the first branch is contained in $A_j$, while $\CU_i$ for the second is disjoint from $A_j$.
This proves the thesis.
\end{proof}

We aim to prove that each recursion tree has $O(n^3)$ leaf nodes.
Consider the following definition:
\begin{definition} 
Let $\CU(S) = \CF(S) \cup \CTD(S)$, as above.
The {\em mass} of a given state $S$ of the algorithm is equal to
$$m(S) = (|\RU(S)| + 1)(|\GU(S)| + 1)(|\BU(S)| + 1).$$
\end{definition}

The mass of the root of the recursion tree is obviously $O(n^3)$, while the mass of each leaf is at least $1$.
As previously, shifting a vertex and resolving a boring vertex do not increase the mass of a state, as they cannot increase the sizes of sets $\RU,\GU,\BU$.
We will prove that for any node of the tree the mass of the node is not smaller than the sum of masses of its sons.
Clearly this leads to the conclusion that the mass of the root node is greater or equal to the sum of masses of all the leaves, which, along with the bounds for the masses of the root and the leaves, shows that there are at most $O(n^3)$ leaves.
Therefore, all we need is the following lemma:
\begin{lemma}\label{lem:splitmass}
Consider any node $S$ of the recursion tree formed immediately before resolving a set $\CTD$ containing no boring vertices. Let $S_0, S_1,\ldots, S_k$ be the children of $S$.
Then $$m(S) \geq \sum_{i=0}^k m(S_i).$$\end{lemma}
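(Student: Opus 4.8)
The plan is to deduce the inequality from Lemma~\ref{lem:splitCF} together with the fact, already noted above, that a shift and a resolution of a boring vertex never enlarge $|\RU|$, $|\GU|$ or $|\BU|$. Write $\cC$ for the colour of the set $\CTD$ resolved at $S$ and $\cX,\cY$ for the remaining two colours. Each child $S_i$ is obtained from $S$ by first moving the whole of $\CTD(S)$ into $\XTS\cup\YTS$ and then performing a sequence of shifts and resolutions of boring vertices. The initial branch move takes vertices out of $U^{\cC}(S)=\CF(S)\cup\CTD(S)$ only --- it lands them in $\TS$-sets, which meet none of the eighteen sets forming $\RU$, $\GU$ or $\BU$ --- so it does not enlarge any of $|\RU(\cdot)|,|\GU(\cdot)|,|\BU(\cdot)|$, and by the observation above neither do the subsequent operations. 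Consequently
$$|U^{\cX}(S_i)|\le |U^{\cX}(S)|\qquad\text{and}\qquad |U^{\cY}(S_i)|\le |U^{\cY}(S)|$$
for every child $S_i$.

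It then remains to handle the $\cC$-coordinate, and this is where Lemma~\ref{lem:splitCF} does the work. Put $\CU_i=\CF(S_i)\cup\CTD(S_i)=U^{\cC}(S_i)$. The lemma says that the $\CU_i$ are pairwise disjoint subsets of $\CF(S)$, hence $\sum_{i=0}^{k}|\CU_i|\le |\CF(S)|$. Since there are $k+1$ branches we have $|\CTD(S)|=k$, so adding $k+1$ to both sides gives
$$\sum_{i=0}^{k}\bigl(|\CU_i|+1\bigr)\ \le\ |\CF(S)|+k+1\ =\ |\CF(S)|+|\CTD(S)|+1\ =\ |\CU(S)|+1 .$$

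Combining the two estimates finishes the proof: bounding the first two factors of $m(S_i)$ by $|U^{\cX}(S)|+1$ and $|U^{\cY}(S)|+1$, pulling them out of the sum over children, and then applying the displayed bound on $\sum_i(|\CU_i|+1)$, one gets
$$\sum_{i=0}^{k} m(S_i)\ \le\ \bigl(|U^{\cX}(S)|+1\bigr)\bigl(|U^{\cY}(S)|+1\bigr)\sum_{i=0}^{k}\bigl(|\CU_i|+1\bigr)\ \le\ \bigl(|U^{\cX}(S)|+1\bigr)\bigl(|U^{\cY}(S)|+1\bigr)\bigl(|\CU(S)|+1\bigr)=m(S).$$
I do not expect a real obstacle here: all the mathematical substance sits in Lemma~\ref{lem:splitCF}, which is already available, and the rest is bookkeeping. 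The one point that needs a moment's care is making the ``$+1$''s fit --- that the $k$ vertices of $\CTD(S)$ lost when descending to a child are exactly compensated by having $k+1$ summands on the left, i.e.\ the identity $|\CF(S)|+k+1=|\CU(S)|+1$ --- together with checking that the $\cX$- and $\cY$-coordinates are monotone already across the initial branch move and not merely across the shifts that follow it.
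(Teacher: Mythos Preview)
Your proof is correct and follows essentially the same approach as the paper's: bound the $\cX$- and $\cY$-coordinates by monotonicity under shifts, resolutions of boring vertices, and the initial branch move, then use Lemma~\ref{lem:splitCF} together with $|\CTD(S)|=k$ to control the sum $\sum_i(|\CU_i|+1)$ by $|\CU(S)|+1$, and multiply out. The only cosmetic difference is that the paper fixes the colour to $\RR$ and writes the displayed chain of inequalities slightly more tersely.
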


\begin{proof}
Without loss of generality, assume that we are resolving the set $\RTD$ in $S$.
Neither resolving a set $\RTD$, shifting a vertex nor resolving a boring vertex can increase the size of sets $\GU,\BU$, so for all $i=0,1,\ldots,k$ we have that $|\GU(S_i)| \leq |\GU(S)|$ and $|\BU(S_i)| \leq |\BU(S)|$.
Moreover, the number of branches (that is, the number of sons of $S$) is equal exactly to $k+1=|\RTD(S)| + 1$ --- one branch for every vertex in $\RTD$ to be assigned the ``other'' colour, and one branch for all the vertices having the same colour.
Thus, application of Lemma \ref{lem:splitCF} immediately yields:
$$\sum_{i=0}^k (|\RU(S_i)| + 1) = |\RTD(S)| + 1 + \sum_{i=0}^k |\RU(S_i)| \leq |\RTD(S)| + 1 + |\RF(S)| = |\RU(S)| + 1.$$
Multiplying this inequality by $(|\GU(S)| + 1)(|\BU(S)| + 1)$ we get
$$m(S) = (|\RU(S)| + 1)(|\GU(S)| + 1)(|\BU(S)| + 1) \geq \sum_{i=0}^k (|\RU(S_i)| + 1) (|\GU(S)| + 1)(|\BU(S)| + 1) \geq \sum_{i=0}^k m(S_i).$$
\end{proof}

As there are $O(n)$ operations on the path to each leaf, and each operation takes $O(n^2)$ time, we have the following corollary:

\begin{corollary} 
The total run--time of the algorithm described in Section \ref{sec:algorithm} is $O(n^{6})$
for each new vertex $v_0$. The whole algorithm runs in $O(n^{7})$ time.
\end{corollary}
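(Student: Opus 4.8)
The final statement to prove is the corollary bounding the total run-time of the algorithm at $O(n^7)$, building on Corollary \ref{cor:pathlen} (each root-to-leaf path has $O(n)$ operations, each costing $O(n^2)$) and the $O(n^3)$ bound on the number of leaves established via Lemmas \ref{lem:splitCF} and \ref{lem:splitmass}.

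\textbf{Plan.} The plan is to first fix one value of $v_0$ and bound the work done in the corresponding recursion tree, then sum over the three choices of $v_0$'s starting set, and finally sum over the $n$ outer iterations of the iterative-compression loop. First I would recall that by Lemma \ref{lem:splitmass} the mass function $m(S)$ is super-additive over children: $m(S) \geq \sum_i m(S_i)$ at every inner node, while shifts and resolutions of boring vertices never increase $m$. Since the root has mass $O(n^3)$ and every leaf has mass at least $1$, telescoping this inequality down the tree gives that the number of leaf nodes is at most $m(\text{root}) = O(n^3)$. Each leaf is reached by a unique root-to-leaf path, and each internal branching node has at least two children, so the total number of nodes in the recursion tree is $O(n^3)$ as well. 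Along each such path, Corollary \ref{cor:pathlen} guarantees only $O(n)$ operations (shifts, boring-vertex resolutions, or branchings), and each of these costs at most $O(n^2)$ — a shift is $O(n)$, resolving a set of boring vertices is $O(n^2)$ via Lemma \ref{lem:boringalg}, and preparing the branches at an inner node is $O(n^2)$. Hence the total cost of all $O(n^3)$ paths is $O(n^3)\cdot O(n)\cdot O(n^2) = O(n^6)$ per choice of $v_0$'s starting set.

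\textbf{Finishing.} Since there are only three starting branches (placing $v_0$ into $\RTS$, $\GTS$, or $\BTS$), the cost of processing the compression step for a single new vertex $v_0$ is $3 \cdot O(n^6) = O(n^6)$. The overall algorithm performs this compression step once for each of the $n$ vertices $v_1, \ldots, v_n$ in the iterative-compression loop described in Section \ref{sec:algorithm}, so the total running time is $n \cdot O(n^6) = O(n^7)$, as claimed.

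\textbf{Main obstacle.} The genuinely nontrivial ingredient — already done in the excerpt — is Lemma \ref{lem:splitCF}, i.e., that the sets $\CU_i = \CF(S_i) \cup \CTD(S_i)$ of the children are pairwise disjoint subsets of $\CF(S)$; without this the mass inequality in Lemma \ref{lem:splitmass} would fail and the leaf count could be exponential. Given that lemma, the remaining obstacle in the corollary itself is purely bookkeeping: being careful that counting \emph{nodes} (not just leaves) of the recursion tree still gives $O(n^3)$, which follows because the tree is branching (every inner node has $\geq 2$ children, so $\#\text{nodes} \leq 2\cdot\#\text{leaves}$), and that the per-operation cost $O(n^2)$ dominates uniformly. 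No step here should require more than a sentence or two of justification.
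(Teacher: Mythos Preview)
Your argument is correct and follows essentially the same route as the paper: bound the number of leaves by $O(n^3)$ via the mass inequality, multiply by the $O(n)$ operations per root-to-leaf path and the $O(n^2)$ cost per operation to get $O(n^6)$, then multiply by $n$ for the iterative-compression loop. The extra remark that the number of internal nodes is also $O(n^3)$ (since every branching has at least two children) is a harmless redundancy---the path-counting bound already suffices, which is all the paper uses.
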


\section*{Acknowledgements}
We would like to thank Daniel Marx for showing us this problem,
and for a number of suggestions that helped make this paper significantly better,
especially regarding Lemma~\ref{lem:splitmass}.

\newpage

\bibliographystyle{plain}
\bibliography{compatible-colouring}

\newpage

\appendix

\section{Stubborn problem reduction}\label{sec:stubborn}

Recall that the {\em stubborn problem} \cite{cameron} can be defined as follows:

\defproblemu{\stubname}{
	An undirected graph $G = (V,E)$ and a constraint function $\LL : V \ra \PP(\{1,2,3,4\})$
	}{
	Does there exists a colouring $\phi : V \ra \{1,2,3,4\}$, for which $\phi(v) \in \LL(v)$, $\phi^{-1}(4)$ is a clique, and for any edge $uw \in E$ the set $\phi^{-1}(\{u,w\})$ is different from $\{1\}$, $\{2\}$ and $\{1,3\}$.
	}

We show that this problem can be reduced to the \threeccname{} problem. 

\subsection{Gadgets}
We begin by showing two gadgets which can be implemented in \threeccname{}. Consider any \threeccname{} instance $(V,\CC)$. 
\begin{definition}
By adding a {\em type one $\cC$--gadget} to $G$ we mean adding $4$ vertices $v_1,v_2,v_3,v_4$ with
$\CC(v_1v_2) = \CC(v_3v_4) = \cX$ and 
$\CC(v_1v_3) = \CC(v_1v_4) = \CC(v_2v_3) = \CC(v_2v_4) = \cY$, 
and for any $v$ outside the gadget we have $\CC(v_1v) = \CC(v_2v) = \CC(v_3v) = \CC(v_4v)$.
The exact restraints can be defined arbitrarily.
\end{definition}

\begin{lemma}\label{lem:typ1}
Consider an instance $(V,\CC)$ of \threeccname{} and a set $S \subset V$. 
Let $(V',\CC')$ be $V$ after adding a type one $\cC$--gadget. 
We put $\CC(v_iv) = \cC$ for $v \in S$ and $\CC(v_iv) = \cX$ for $v \in V \setminus S$. 
Then $(V',\CC')$ has a solution iff $(V,\CC)$ has a solution $\phi$ with $\phi^{-1}(\cC) \cap S = \emptyset$.
\end{lemma}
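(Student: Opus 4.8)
The plan is to analyze the possible colourings of the four gadget vertices $v_1,v_2,v_3,v_4$ inside the complete graph they form, and show that in any feasible colouring of $(V',\CC')$ none of them can receive colour $\cC$, while conversely the gadget can always be ``filled in'' consistently once the rest of the colouring avoids giving colour $\cC$ to any vertex of $S$. First I would examine the constraints internal to the gadget. The edges $v_1v_2$ and $v_3v_4$ are coloured $\cX$, and the four cross edges $v_1v_3,v_1v_4,v_2v_3,v_2v_4$ are coloured $\cY$. I would argue that no $v_i$ can be coloured $\cC$: suppose $\phi(v_1)=\cC$; then since all edges from $v_1$ to the other gadget vertices are coloured $\cX$ or $\cY$, none of them conflict, so internally this is not yet forbidden --- hence I must instead use the external edges. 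Every vertex $v\in S$ is joined to each $v_i$ by a $\cC$-edge, so if $\phi(v_i)=\cC$ and also $\phi(v)=\cC$ for some $v\in S$ we get a monochromatic $\cC$-edge, contradiction; but that only forbids $\phi(v_i)=\cC$ when some vertex of $S$ is already $\cC$. This shows I have the roles slightly off, so the real content is the converse direction plus a careful internal case analysis, which I now lay out.

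The key internal observation is: in any feasible colouring of the gadget, the pair $\{v_1,v_2\}$ cannot both avoid $\cY$ and simultaneously the pair $\{v_3,v_4\}$ both avoid $\cY$ --- more precisely, I would show that at least one of $v_1,v_2$ and at least one of $v_3,v_4$ must be coloured $\cY$, unless... Actually the cleanest route is a direct case check: the four cross edges are all $\cY$, so we cannot have $v_i,v_j$ both coloured $\cY$ for $i\in\{1,2\}$, $j\in\{3,4\}$. Thus the set of $\cY$-coloured gadget vertices lies entirely within $\{v_1,v_2\}$ or entirely within $\{v_3,v_4\}$. Combined with the $\cX$-edges $v_1v_2$ and $v_3v_4$ (which forbid both endpoints being $\cX$), a short finite enumeration of the at most $3^4$ colourings --- or better, of the few ``shapes'' --- pins down exactly which colourings of $(v_1,v_2,v_3,v_4)$ are internally feasible, and in particular shows that $\cC$ may appear on the gadget vertices only in restricted patterns, and that whenever a gadget vertex is coloured $\cC$ its external $\cC$-neighbours (the set $S$) are forced to avoid $\cC$. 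I expect this enumeration to be the main obstacle: it is routine but must be done carefully, since the conclusion of the lemma ($\phi^{-1}(\cC)\cap S=\emptyset$) depends delicately on which gadget colourings survive.

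For the ``if'' direction I would take a solution $\phi$ of $(V,\CC)$ with $\phi^{-1}(\cC)\cap S=\emptyset$ and extend it to the gadget by choosing a fixed internally-feasible colouring of $(v_1,v_2,v_3,v_4)$ that uses only colours $\cX$ and $\cY$ (e.g. $\phi(v_1)=\phi(v_3)=\cX$ is blocked; so instead $\phi(v_1)=\cX,\phi(v_2)=\cY,\phi(v_3)=\cY,\phi(v_4)=\cX$, say, and then verify the four $\cY$ cross-edges and the two $\cX$ edges are all satisfied). Then I must check the external edges: an edge $v_iv$ with $v\in S$ has colour $\cC$, and neither $\phi(v_i)\in\{\cX,\cY\}$ nor $\phi(v)$ (which is not $\cC$ by hypothesis) equals $\cC$, so it is fine; an edge $v_iv$ with $v\notin S$ has colour $\cX$, and I would pick the extension so that for each such $v$ at most one of $\phi(v_i),\phi(v)$ is $\cX$ --- but $\phi(v)$ may well be $\cX$, so I actually need $\phi(v_i)\ne\cX$ for the vertices $v_i$ adjacent via $\cX$-edges to $V\setminus S$, which is all of them. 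Hence the correct fixed extension must colour \emph{no} $v_i$ with $\cX$ --- but that is impossible given the $\cX$-edges $v_1v_2,v_3v_4$. This tension is exactly why the lemma forces $\phi^{-1}(\cC)\cap S=\emptyset$ rather than being freely satisfiable, and resolving it correctly (likely: $v_i$ adjacent to $V\setminus S$ via $\cX$-edges must be $\cY$ or $\cC$, $\cC$ is excluded by the $S$-side $\cC$-edges once we also know $S$-vertices avoid $\cC$, so $v_i$ must be $\cY$ for all $i$ --- contradicting a cross-edge) will pin down the precise statement; I would reconcile the gadget's edge colours with Lemma \ref{lem:boring}-style triangle arguments to extract the clean equivalence, treating the degenerate case $|\CTD|=1$ / small $S$ separately.
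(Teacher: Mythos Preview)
Your proposal has the direction of the gadget backwards, and this error propagates through both halves of the argument. The internal analysis you carry out is correct up to the point where you observe that the set of $\cY$--coloured gadget vertices must lie entirely in $\{v_1,v_2\}$ or entirely in $\{v_3,v_4\}$, and that each of the pairs $\{v_1,v_2\}$ and $\{v_3,v_4\}$ cannot be monochromatically $\cX$. The conclusion you should draw from this is that the gadget cannot be coloured using only $\cX$ and $\cY$: say the $\cY$--vertices lie in $\{v_1,v_2\}$; then $v_3,v_4\in\{\cX,\cC\}$, and the $\cX$--edge $v_3v_4$ forces one of them to be $\cC$. Hence in \emph{every} feasible colouring of $(V',\CC')$ at least one $v_i$ receives colour $\cC$, and since that $v_i$ is joined to every vertex of $S$ by a $\cC$--edge, no vertex of $S$ can be $\cC$. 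This is the ``only if'' direction.

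For the ``if'' direction your attempted extensions (using only $\cX$ and $\cY$ on the gadget) were bound to fail for exactly the reason above. The correct extension is the opposite extreme: colour \emph{all four} gadget vertices $\cC$. The internal gadget edges are coloured $\cX$ or $\cY$, so two $\cC$--endpoints never conflict with them; the external edges to $S$ are $\cC$--edges whose other endpoint avoids $\cC$ by hypothesis; and the external edges to $V\setminus S$ are $\cX$--edges, which are automatically satisfied since $\phi'(v_i)=\cC\neq\cX$. No case analysis, no triangle arguments, and no reference to Lemma~\ref{lem:boring} or to $\CTD$ are needed --- the mention of $|\CTD|=1$ at the end of your plan is a confusion with the algorithmic part of the paper and has nothing to do with this gadget lemma.
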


\begin{proof}
If $(V,\CC)$ has a solution as above, we put $\phi' = \phi$ on $V$, $\phi'(v_i) = \cC$.
This is trivially a solution to $(V',\CC')$.

On the other hand, direct check shows that any feasible colouring of $\{v_1,v_2,v_3,v_4\}$ has to have at least one vertex of colour $\cC$. 
Thus any feasible colouring of $V'$ restricted to $V$ is a restricted colouring satisfying the conditions above.
\end{proof}

As a corollary we deduce that adding type one gadgets enable us to implement constraint lists --- in addition to the standard \threeccname{} structure we can demand that an arbitrary set of vertices {\em is not} of colour $\RR$ (or $\GG$ or $\BB$) by adding a type one $\RR$--gadget and connecting it to the set by edges of colour $\RR$.
Further on we assume we added to the graph type one gadgets of all three colours.

\begin{definition}
Let $u,w \in G$. 
By adding a {\em type two $\cC$--gadget} to $uw$ we mean adding two vertices $v_0,v_1$ with 
$\CC(uv_0) = \CC(wv_1) = \cC$,
$\CC(uv_1) = \CC(wv_0) = \cX$,
$\CC(v_0v_1) = \cY$.
Moreover, we assume both $v_0$ and $v_1$ are connected by $\cX$ edges to a type one $\cX$--gadget.
All the other edges connecting $v_0$ and $v_1$ to the graph are also $\cX$ edges.
\end{definition}

\begin{lemma}\label{lem:typ2}
Let $(V,\CC)$ be an instance of \threeccname{} and let $(V',\CC')$ be the same instance after adding a type two $\cC$--gadget to the edge $uw$.
Then $(V',\CC')$ has a feasible colouring iff $(V,\CC)$ has a feasible colouring in which at least one endpoint of $uw$ is not of colour $\cC$.
\end{lemma}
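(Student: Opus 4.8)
The plan is to prove Lemma~\ref{lem:typ2} by a direct case analysis on how a feasible colouring must treat the two gadget vertices $v_0,v_1$, using the type one $\cX$--gadget attachment (via Lemma~\ref{lem:typ1}) to force $v_0$ and $v_1$ to avoid colour $\cX$. First I would record the constraint structure inside the gadget: we have $\CC'(uv_0)=\CC'(wv_1)=\cC$, $\CC'(uv_1)=\CC'(wv_0)=\cX$, and $\CC'(v_0v_1)=\cY$, and additionally both $v_0$ and $v_1$ are joined by $\cX$--edges to a type one $\cX$--gadget, so by Lemma~\ref{lem:typ1} any feasible colouring $\phi'$ of $(V',\CC')$ must satisfy $\phi'(v_0)\neq\cX$ and $\phi'(v_1)\neq\cX$; hence $\phi'(v_0),\phi'(v_1)\in\{\cC,\cY\}$. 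All other edges from $v_0,v_1$ to $V$ are $\cX$--coloured, and since $v_0,v_1$ are never coloured $\cX$, those edges impose no further constraint on the restriction $\phi'|_V$.

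For the forward direction, suppose $(V',\CC')$ has a feasible colouring $\phi'$; I claim $\phi=\phi'|_V$ is feasible for $(V,\CC)$ and at least one of $u,w$ is not $\cC$ under $\phi$. Feasibility of $\phi$ on the original edges is immediate since $(V,\CC)$ is an induced sub-instance, and as noted the $\cX$--edges from $v_0,v_1$ to $V$ cause no violation. For the clause "at least one endpoint of $uw$ is not $\cC$": suppose for contradiction $\phi'(u)=\phi'(w)=\cC$. The $\cC$--edge $uv_0$ then forces $\phi'(v_0)\neq\cC$, so $\phi'(v_0)=\cY$; symmetrically the $\cC$--edge $wv_1$ forces $\phi'(v_1)=\cY$. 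But then the edge $v_0v_1$, coloured $\cY$, has both endpoints of colour $\cY$ — a violation. Hence not both of $u,w$ are $\cC$.

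For the reverse direction, suppose $(V,\CC)$ has a feasible colouring $\phi$ with $\{\phi(u),\phi(w)\}\neq\{\cC,\cC\}$, i.e.\ at least one endpoint, say $u$ (the case of $w$ being symmetric), has $\phi(u)\neq\cC$. I extend $\phi$ to $\phi'$ on $V'$ by first colouring the type one gadgets as in Lemma~\ref{lem:typ1}, then choosing $\phi'(v_0)$ and $\phi'(v_1)$ from $\{\cC,\cY\}$ so as to satisfy the four internal coloured edges of the type two gadget simultaneously. The only potentially dangerous edges are: $uv_0$ (colour $\cC$), which needs $\phi'(v_0)\neq\cC$ or $\phi(u)\neq\cC$ — satisfied since $\phi(u)\neq\cC$; $wv_1$ (colour $\cC$), which needs $\phi'(v_1)\neq\cC$ or $\phi(w)\neq\cC$; $uv_1,wv_0$ (colour $\cX$), automatically fine because $\phi'(v_0),\phi'(v_1)\neq\cX$; and $v_0v_1$ (colour $\cY$), which needs $\phi'(v_0)\neq\cY$ or $\phi'(v_1)\neq\cY$. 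Set $\phi'(v_0)=\cY$ and $\phi'(v_1)=\cC$: then $uv_0$ is fine (as $\phi(u)\neq\cC$), $v_0v_1$ is fine (since $\phi'(v_1)=\cC\neq\cY$), $wv_1$ needs $\phi(w)\neq\cC$ — which is not guaranteed in this sub-case. So instead, when $\phi(w)=\cC$ (and $\phi(u)\neq\cC$), set $\phi'(v_0)=\cC$ and $\phi'(v_1)=\cY$: then $wv_1$ (colour $\cC$) is fine since $\phi'(v_1)=\cY\neq\cC$, $uv_0$ (colour $\cC$) is fine since $\phi(u)\neq\cC$, and $v_0v_1$ (colour $\cY$) is fine since $\phi'(v_0)=\cC\neq\cY$; and if $\phi(w)\neq\cC$ too, the earlier assignment $\phi'(v_0)=\cY,\phi'(v_1)=\cC$ works. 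In every case we obtain a feasible $\phi'$, so $(V',\CC')$ has a solution.

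The main obstacle I anticipate is purely bookkeeping: making sure the case split on which endpoint of $uw$ avoids colour $\cC$ is exhaustive and that in each case the chosen values of $\phi'(v_0),\phi'(v_1)$ respect \emph{all} of the $\cC$--, $\cX$--, and $\cY$--edges inside the gadget at once, together with the $\cX$--edges to the auxiliary type one $\cX$--gadget. There is no deep difficulty — the gadget has only two free vertices each with two allowed colours, so there are only four combinations to check — but the argument must be stated carefully enough that the reader sees every coloured edge has been accounted for, and that the type one $\cX$--gadget really does pin $\phi'(v_0),\phi'(v_1)$ away from $\cX$ via Lemma~\ref{lem:typ1}.
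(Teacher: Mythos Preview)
Your proof is correct and follows essentially the same approach as the paper: both directions hinge on the type one $\cX$--gadget forcing $\phi'(v_0),\phi'(v_1)\in\{\cC,\cY\}$, together with the $\cY$--edge $v_0v_1$ and the $\cC$--edges $uv_0$, $wv_1$. The only difference is cosmetic: for the reverse direction the paper gives a single uniform rule---set $\phi'(v_0)=\cC$ if $\phi(u)\neq\cC$ and $\phi'(v_0)=\cY$ otherwise, and symmetrically for $v_1$ with $w$---which avoids your case split, but the content is identical.
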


\begin{proof}
Note that in any feasible colouring of $(V',\CC')$ neither $v_0$ nor $v_1$ can be coloured $\cX$ due to the type one $\cX$--gadget. 
Moreover, at least one of them is not coloured $\cY$ due to the $\cY$--edge connecting them.
Thus at least one of them has to be coloured $\cC$, and --- due to the $\cC$--edges $v_0u$ and $v_1w$ --- at least one of $uw$ has to be of a colour different than $\cC$.
Thus the restriction of a feasible colouring on $(V',\CC')$ to $(V,\CC)$ is a colouring as above.

On the other hand, any colouring of $(V,\CC)$ as above can be extended to a proper colouring of $(V',\CC')$ by putting $\phi'(v_0) = \cC$ if $\phi'(u) \neq \cC$ and $\phi'(v_0) = \cY$ if $\phi'(u) = \cC$, and the same for $v_1$ and $w$.
\end{proof}

This gadget allows us to add additional edge constraints to the graph (as if we were able to draw multiple edges).

\subsection{Reduction}
Consider any instance $((V,E),\LL)$ of the \stubname{} problem. 
We construct an equivalent instance $(V',\CC')$ of \threeccname{} as follows:

\begin{itemize}
\item If $uw \in E$, put $\CC'(uw) = \RR$;
\item If $uw \notin E$, put $\CC'(uw) = \BB$;
\item Add type one gadgets of all three colours to $V$;
\item If $2 \notin \LL(v)$, connect $v$ to the $\RR$--gadget by $\RR$--edges;
\item If $4 \notin \LL(v)$, connect $v$ to the $\BB$--gadget by $\BB$--edges;
\item If $\{1,3\} \cap \LL(v) = \emptyset$, connect $v$ to the $\GG$--gadget by $\GG$--edges;
\item If $uw \in E$ and $3 \notin \LL(u)\cap \LL(w)$, add a type two $\GG$--gadget to $uw$.
\item All the edges connecting a type one $\cC$--gadget to the rest of the graph not defined above are $\cX$--edges.
\end{itemize}

We set out to prove the following theorem: 

\begin{theorem} There exists a feasible solution to the instance $((V,E),\LL)$ of \stubname{} iff there exists a feasible solution to the instance $(V',\CC')$ of \threeccname. \end{theorem}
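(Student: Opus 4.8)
The plan is to verify both implications through the colour dictionary under which Stubborn colour $2$ answers to $\RR$, colour $4$ to $\BB$, and colours $1$ and $3$ both answer to $\GG$; everything then reduces to bookkeeping except for the need to tell the two ``$\GG$'' colours apart, which is precisely what the type two $\GG$--gadgets are built to do. I would prove the two directions separately, reading the behaviour of the gadgets off Lemmas~\ref{lem:typ1} and~\ref{lem:typ2}.

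\textbf{The $(\Leftarrow)$ direction.} Take a feasible colouring $\psi$ of $(V',\CC')$ and restrict it to $V$. Lemma~\ref{lem:typ1}, applied to the three type one gadgets, gives that $\psi(v)=\RR$ implies $2\in\LL(v)$, that $\psi(v)=\BB$ implies $4\in\LL(v)$, and that $\psi(v)=\GG$ implies $\LL(v)\cap\{1,3\}\neq\emptyset$; Lemma~\ref{lem:typ2}, applied to the type two $\GG$--gadgets, gives that whenever $uw\in E$ and $3\notin\LL(u)\cap\LL(w)$ at least one of $\psi(u),\psi(w)$ is not $\GG$. I then define $\phi:V\to\{1,2,3,4\}$ by $\phi(v)=2$ if $\psi(v)=\RR$, $\phi(v)=4$ if $\psi(v)=\BB$, and, when $\psi(v)=\GG$, $\phi(v)=3$ if $3\in\LL(v)$ and $\phi(v)=1$ otherwise (so that in the last case $1\in\LL(v)$). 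By construction $\phi(v)\in\LL(v)$. If $\phi(u)=\phi(w)=4$ then $\psi(u)=\psi(w)=\BB$, so if $uw\notin E$ this edge carries colour $\BB$, contradicting feasibility of $\psi$; hence $\phi^{-1}(4)$ is a clique. Finally fix $uw\in E$, an edge of colour $\RR$: $\{\phi(u),\phi(w)\}=\{2\}$ would force $\psi(u)=\psi(w)=\RR$, impossible; and $\{\phi(u),\phi(w)\}\in\{\{1\},\{1,3\}\}$ means some endpoint was assigned $1$, hence has $3$ outside its list, so $3\notin\LL(u)\cap\LL(w)$, so a type two $\GG$--gadget sits on $uw$ and forbids $\psi(u)=\psi(w)=\GG$, again impossible. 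Thus $\phi$ solves the Stubborn instance.

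\textbf{The $(\Rightarrow)$ direction.} Take a Stubborn solution $\phi$ and define $\psi$ on $V$ by inverting the dictionary: $\psi(v)$ equals $\RR$, $\BB$, $\GG$ according to whether $\phi(v)$ equals $2$, equals $4$, or lies in $\{1,3\}$. First I check the constraints internal to $E(V)$: an $\RR$--edge $uw\in E$ with $\psi(u)=\psi(w)=\RR$ would give $\{\phi(u),\phi(w)\}=\{2\}$, forbidden; a $\BB$--edge $uw\notin E$ with $\psi(u)=\psi(w)=\BB$ would give $\phi(u)=\phi(w)=4$ with $uw$ a non--edge, contradicting that $\phi^{-1}(4)$ is a clique. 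Then I extend $\psi$ gadget by gadget using Lemmas~\ref{lem:typ1} and~\ref{lem:typ2}, checking each hypothesis: for a type one $\RR$--gadget joined to $S=\{v:2\notin\LL(v)\}$ the hypothesis $\psi^{-1}(\RR)\cap S=\emptyset$ holds since $v\in S$ forces $\phi(v)\neq 2$, and the $\BB$-- and $\GG$--gadgets are analogous; for a type two $\GG$--gadget on $uw\in E$ with $3\notin\LL(u)\cap\LL(w)$ I must see that $\psi$ does not colour both $u$ and $w$ with $\GG$, and indeed if it did then $\phi(u),\phi(w)\in\{1,3\}$, and since $\{\phi(u),\phi(w)\}$ is neither $\{1\}$ nor $\{1,3\}$ it must be $\{3\}$, forcing $3\in\LL(u)\cap\LL(w)$, a contradiction.

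\textbf{Main obstacle.} The only delicate point is that the per--gadget extensions furnished by Lemmas~\ref{lem:typ1} and~\ref{lem:typ2} must be pasted into a single feasible colouring of $(V',\CC')$. I would argue this by noting that those extensions colour a type one $\cC$--gadget entirely with $\cC$ and the two special vertices of a type two $\cC$--gadget with colours from $\{\cC,\cY\}$ only; since every edge of $\CC'$ not used in setting up one of the lemmas --- in particular all edges between distinct gadgets and all edges from gadget vertices into $V$ --- is an $\cX$--edge for the relevant gadget, and no gadget vertex is ever coloured $\cX$, those edges are satisfied automatically, so the extensions are mutually compatible. Together with the two directions above this establishes the claimed equivalence, and hence the theorem.
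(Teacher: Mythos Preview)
Your proof is correct and follows the same approach as the paper: the same colour dictionary ($2\leftrightarrow\RR$, $4\leftrightarrow\BB$, $\{1,3\}\leftrightarrow\GG$), the same definition of $\phi$ from $\psi$ in the $(\Leftarrow)$ direction (splitting $\GG$ into $3$ or $1$ according to whether $3\in\LL(v)$), and the same use of Lemmas~\ref{lem:typ1} and~\ref{lem:typ2} to handle the gadgets. Your ``main obstacle'' paragraph, arguing that the separate gadget extensions can be merged because gadget vertices are never coloured with their own $\cX$ and all residual edges are $\cX$--edges, is a point the paper passes over in a single phrase; you make it explicit, but the substance is the same.
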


\begin{proof}
If we have a solution $\phi$ to $((V,E),\LL)$, consider the following $\phi'$ for $v\in V$: if $\phi(v) = 2$, we put $\phi'(v) = \RR$, if $\phi(v) = 4$, we put $\phi'(v) = \BB$ and if $\phi(v) \in \{1,3\}$ we put $\phi'(v) = \GG$.
This is a feasible solution to $(V',\CC')$ without the added gadgets --- as there are no $\GG$ edges between vertices from $V$, $\phi^{-1}(4)$ is a clique so there are no $\BB$--edges connecting two $\BB$--vertices, and $\phi^{-1}(2)$ is an independent set, so there are no $\RR$--edges connecting two $\RR$--vertices.
Moreover, note that as the list constraints for $((V,E),\LL)$ were satisfied, the type one gadget constraints are satisfied in $(V',\CC')$.
Finally, for any two $\GG$--vertices we either have $uw \notin E$ or both of them were given $\phi(u)=\phi(w)=3$. Thus list constraints for $u$ and $w$ allowed value $3$, so there was no $\GG$--gadget on $uw$. So the type two gadget constraints are satisfied as well. Therefore, using Lemmata \ref{lem:typ1} and \ref{lem:typ2} one can extend $\phi'$ on the whole $V'$ obtaining a solution to $(V',\CC')$.

In the other direction, considering a feasible solution $\phi'$ to $(V',\CC')$ we can obtain a feasible solution to $((V,E),\LL)$ by putting $\phi(v) = 2$ for $\phi'(v) = \RR$, $\phi(v) = 4$ for $\phi'(v) = \BB$, $\phi(v) = 3$ if $\phi'(v) = \GG$ and $3 \in \LL(v)$ and $\phi(v) = 1$ if $\phi'(v) = \GG$ and $3 \notin \LL(v)$.
\end{proof}

\end{document}